\documentclass[journal,10pt]{IEEEtran}
\usepackage{amsmath}
\usepackage{amssymb}
\usepackage{amsfonts}
\usepackage{graphicx}
\usepackage{epsfig}
\usepackage{subfigure}
\usepackage{psfrag}
\hyphenation{op-tical net-works semi-conduc-tor}
\usepackage{color}
\usepackage{url}
\usepackage{cite}
\begin{document}
\title{CoMP Meets Smart Grid: A New Communication and Energy Cooperation Paradigm}

\author{Jie Xu and Rui Zhang
\thanks{This paper has been presented in part at IEEE Global Communications Conference (Globecom), Atlanta, GA USA, December 9-13, 2013.}
\thanks{J. Xu is with the Department of
Electrical and Computer Engineering, National University of
Singapore (email: elexjie@nus.edu.sg).}
\thanks{R. Zhang is with
the Department of Electrical and Computer Engineering, National
University of Singapore (e-mail: elezhang@nus.edu.sg). He is also
with the Institute for Infocomm Research, A*STAR, Singapore.}}
\maketitle

\begin{abstract}\label{sec:abstract}
  In this paper, we pursue a unified study on smart grid and coordinated multi-point (CoMP) enabled wireless communication by investigating a new joint communication and energy cooperation approach. We consider a practical CoMP system with clustered multiple-antenna base stations (BSs) cooperatively communicating with multiple single-antenna mobile terminals (MTs), where each BS is equipped with local renewable energy generators to supply power and {\color{black}also a smart meter to enable two-way energy flow with the grid}. We propose a new \emph{energy cooperation} paradigm, {\color{black}{where a group of BSs dynamically share their renewable energy for more efficient operation via locally injecting/drawing power to/from an aggregator with a zero effective sum-energy exchanged. Under this new energy cooperation model}}, we consider the downlink transmission in one CoMP cluster with cooperative zero-forcing (ZF) based precoding at the BSs. We maximize the weighted sum-rate for all MTs by jointly optimizing the transmit power allocations at cooperative BSs and their exchanged energy amounts subject to a new type of power constraints featuring energy cooperation among BSs with practical loss ratios. Our new setup with BSs' energy cooperation generalizes the conventional CoMP transmit optimization under BSs' sum-power or individual-power constraints. It is shown that with energy cooperation, the optimal throughput is achieved when all BSs transmit with all of their available power, which is different from the conventional CoMP schemes without energy cooperation where BSs' individual power constraints may not be all tight at the same time. This result implies that some harvested energy may be wasted without any use in the conventional setup due to the lack of energy sharing among BSs, whereas the total energy harvested at all BSs is efficiently utilized for throughput maximization with the proposed energy cooperation, thus leading to a new {\emph{energy cooperation gain}}. Finally, we validate our results by simulations under various practical setups, and show that the proposed joint communication and energy cooperation scheme substantially improves the downlink throughput of CoMP systems powered by smart grid and renewable energy, as compared to other suboptimal designs without communication and/or energy cooperation.
\end{abstract}

\begin{keywords}
Smart grid, coordinated multi-point (CoMP), cellular network, energy cooperation, power control.
\end{keywords}

\IEEEpeerreviewmaketitle
\setlength{\baselineskip}{1\baselineskip}
\newtheorem{definition}{\underline{Definition}}[section]
\newtheorem{fact}{Fact}
\newtheorem{assumption}{Assumption}
\newtheorem{theorem}{\underline{Theorem}}[section]
\newtheorem{lemma}{\underline{Lemma}}[section]
\newtheorem{corollary}{\underline{Corollary}}[section]
\newtheorem{proposition}{\underline{Proposition}}[section]
\newtheorem{example}{\underline{Example}}[section]
\newtheorem{remark}{\underline{Remark}}[section]
\newtheorem{algorithm}{\underline{Algorithm}}[section]
\newcommand{\mv}[1]{\mbox{\boldmath{$ #1 $}}}

\section{Introduction}\label{sec:1}

\PARstart{I}{mproving} energy efficiency in cellular networks has received significant attentions recently. Among assorted energy saving or so-called green techniques that were proposed (see e.g. \cite{Hasan} and the references therein), exploiting renewable energy such as solar and/or wind energy to power cellular base stations (BSs) is a practically appealing solution to reduce the on-grid energy consumption of cellular networks, since the renewable energy is in general more ecologically and economically efficient than conventional energy generated from e.g. fossil fuels \cite{huawei,HanAnsari2014}. Moreover, with recent advancement in smart grid technologies, end users such as BSs in cellular networks {\color{black}can employ smart meters to enable both two-way information and energy flows} with the grid (see e.g. \cite{XueSmartGird,Leithon2013,Chen2013,Ilic,SaadHan} and the references therein) for more efficient and flexible utilization of their locally produced renewable energy that is random and intermittent in nature. {\color{black}However, since there are a large number of BSs in the network, the challenge faced by the cellular operator is how to efficiently coordinate the BSs' renewable generations to match their energy demands, by taking advantage of the two-way information and energy flows in smart grid \cite{XueSmartGird}.}

On the other hand, in order to mitigate the inter-cell interference (ICI) for future cellular networks with more densely deployed BSs, BSs' cooperation or the so-called coordinated multi-point (CoMP) transmission has been extensively investigated in the literature \cite{Gesbert2010,Irmer2011}. With CoMP transmission, BSs share their transmit messages as well as channel state information (CSI) so as to enable cooperative downlink transmissions to mobile terminals (MTs) by utilizing the ICI in a beneficial way for coherent combining. In practice, since the transmit messages and CSI sharing among cooperative BSs are limited by the capacity and latency of backhaul links, a full-scale CoMP transmission by coordinating all the BSs is difficult to implement in practical systems. Therefore, clustered CoMP transmission is more favorable, where BSs are partitioned into different clusters and each cluster implements CoMP transmission separately \cite{ZhangChen2009,NgHuang2010}.

\begin{figure}
\centering
 \epsfxsize=1\linewidth
    \includegraphics[width=8.5cm]{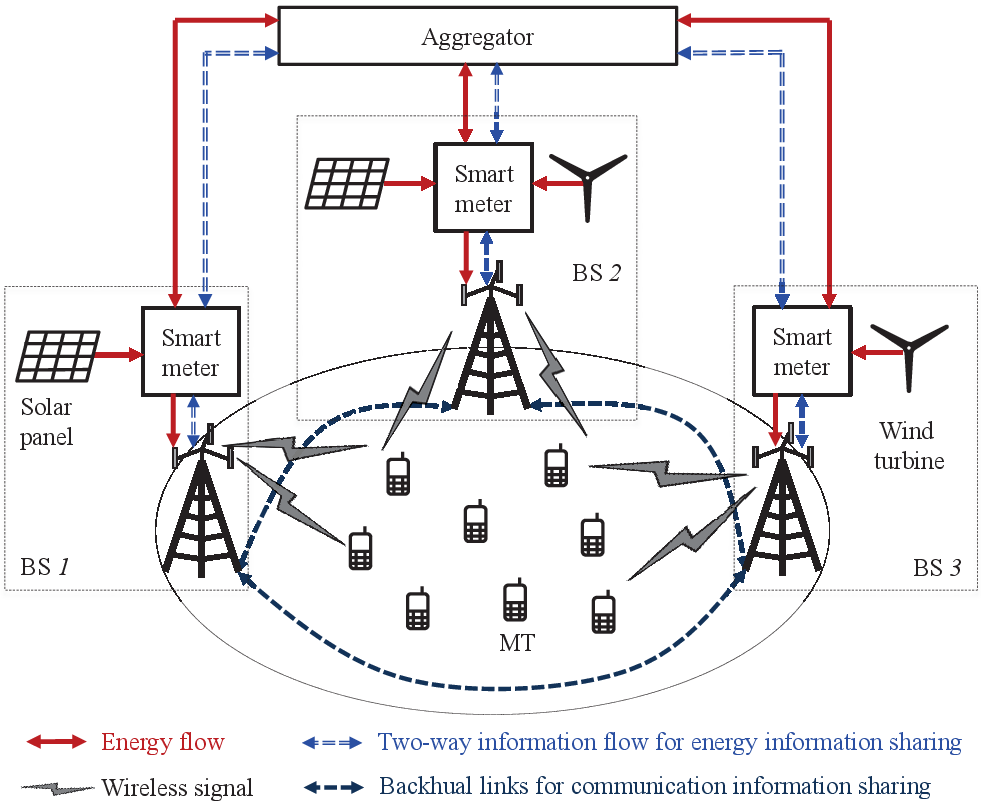}
\caption{An example of a three-cell CoMP system with joint communication and energy cooperation, where the BSs are equipped with local wind and/or solar energy harvesting devices and can share energy among each other through bidirectional energy transfer {\color{black}{through an aggregator}}.} \label{fig:1}
\end{figure}

In this paper, we pursue a unified study on both the smart grid and CoMP enabled cellular networks as shown in Fig. \ref{fig:1}, where each BS is equipped with one or more energy harvesting devices (wind turbines and/or solar panels) to provide renewable energy, and {\color{black}also a smart meter to enable the two-way information and energy flows with the smart grid. To effectively utilize the unevenly generated wind/solar energy over geographically distributed BSs to match their demands, we propose a new {\it energy cooperation} paradigm for the BSs to share their renewable energy with each other. The implementation of energy cooperation is through the aggregator \cite{Gkatzikis2013}, which serves as a mediator or broker between the grid operator and a group of BSs to coordinate the BSs' two-way energy flows. With an aggregator, the energy sharing between any two BSs is realized via one BS locally injecting power to the aggregator and simultaneously the other BS drawing power from it. As a return, the group of BSs need to pay the aggregator a service fee, while the BSs should also commit to ensure that their total power effectively injected into the aggregator is equal to that drawn from the aggregator, in order to maintain the supply-load equilibrium at the aggregator. In practice, the service fee paid to the aggregator should be carefully decided by balancing the trade-off between the profit for the aggregator and the cost saving for the BSs brought by energy cooperation. In this paper, we assume that such service fee is sufficiently low for the cellular operator and thus is ignored for simplicity.}

For the purpose of exposition, we study the joint communication and energy cooperation approach by focusing on one single CoMP cluster, where a group of multiple-antenna BSs cooperatively transmit to multiple single-antenna MTs by applying zero-forcing (ZF) based precoding \cite{ZhangChen2009,Zhang2010}. We jointly optimize the transmit power allocations at cooperative BSs and the amount of transferred energy among them so as to maximize the weighted sum-rate at all MTs, subject to  a new type of power constraints at BSs featuring their energy cooperation with practical loss ratios. Interestingly, our new setup with BSs' energy cooperation can be viewed as a generalization of the conventional CoMP transmit optimization under the BSs' sum-power constraint (e.g., by assuming ideal energy sharing among BSs without any loss) or BSs' individual-power constraints (e.g., without energy sharing among BSs applied) \cite{Zhang2010}. To solve this general problem, we propose an efficient algorithm by applying the techniques from convex optimization.  Based on the optimal solution, it is revealed that with energy cooperation, the maximum weighted sum-rate is achieved when all the BSs transmit with all of their available power, which is different from the conventional CoMP solution without energy cooperation where BSs' individual power constraints may not be all tight at the same time (see Section \ref{sec:comp} for more details). This interesting result implies that some harvested energy may be wasted without any use in the conventional setup due to the lack of energy sharing among BSs, whereas the total energy harvested at all BSs is efficiently utilized to maximize the throughput with the proposed energy cooperation, thus leading to a new {\it energy cooperation gain}. Finally, we validate our results by simulations under various practical setups, and show that the proposed joint communication and energy cooperation scheme substantially improves the downlink throughput of CoMP systems powered by smart grid and renewable energy, as compared to other suboptimal designs without communication and/or energy cooperation.

It is worth noting that exploiting two-way energy flows to help integrate distributed energy prosumers into the smart grid has been actively considered by government regulations (e.g., feed-in tariff and net metering{\footnote{See e.g. {\url{http://en.wikipedia.org/wiki/Feed-in_tariff}}, and {\url{http://en.wikipedia.org/wiki/Net_metering}}.}}), and also attracted significant research interests recently \cite{Leithon2013,Chen2013,Ilic,SaadHan}. {\color{black}For instance, one possible approach is to allow the grid operator to directly coordinate the prosumers by setting time-varying prices for them to buy and sell energy \cite{Leithon2013,Chen2013}. However, this approach may require high complexity and overhead for implementation at the grid operator due to the large number of prosumers such as distributed BSs in the cellular network each with a limited energy supply/demand amount. It may also induce a high energy cost to the cellular operator since the grid operator often sets the energy buying price to be much higher than the selling price to maximize its own revenue. Differently, our proposed energy cooperation through the aggregator is more promising as it ensures win-win benefits for all the parties involved: First, the complexity of implementing the two-way energy flow with the BSs is significantly reduced, since the grid operator only needs to deal with a small number of super-prosumers (BS groups) via the aggregators. Second, each aggregator can gain revenue by charging a service fee to the cellular operator; while energy cooperation of BSs through the aggregator also leads to a lower energy cost of the cellular operator, thanks to the more efficient utilization of the BSs' locally generated renewable energy for saving the expensive on-grid energy purchase (provided that such cost saving well compensates the service fee paid to the aggregator).}

It is also worth noting that there have been other recent works in the literature \cite{Chia2013,GuoTCOM,Zheng2013,GongNiu2013} that investigated another way to implement energy cooperation in cellular networks, where energy exchange is realized by deploying dedicated power lines connecting different BSs. However, this approach may be too costly to be implemented in practice. In contrast, our proposed energy cooperation by utilizing {\color{black}the aggregator and} the existing grid infrastructures is a new solution that is more practically feasible. Moreover, \cite{Gurakan2012,Zheng2014} have proposed to implement the energy exchange among wireless terminals via a technique so-called wireless energy transfer, which, however, has very limited energy transfer efficiency that renders it less useful for BSs' energy sharing in cellular networks. Furthermore, \cite{BuYu2012} has studied smart grid powered cellular networks, in which the utilities of both the cellular network and the power network are optimized based on a two-level Stackelberg game.

The remainder of this paper is organized as follows. Section \ref{sec:system} introduces the system model and presents the problem formulation for joint communication and energy cooperation. Section \ref{sec:optimal} shows the optimal solution to the formulated problem. Section \ref{sec:suboptimal} presents various suboptimal solutions without energy and/or communication cooperation. Section \ref{sec:numerical} provides simulation results to evaluate the performances of proposed optimal and suboptimal schemes. Finally, Section \ref{sec:conclusion} concludes the paper.

{\it Notation:} Scalars are denoted by lower-case letters, vectors by bold-face lower-case letters and matrices by bold-face upper-case letters. $\mv{I}$ and $\mv{0}$ denote an identity matrix and an all-zero matrix, respectively, with appropriate dimensions. $\mathbb{E}(\cdot)$ denotes the statistical expectation. For a square matrix $\mv{S}$, $\mathtt{tr}(\mv{S})$ denotes the trace of $\mv{S}$. For a matrix $\mv{M}$ of arbitrary size, $\mv{M}^H$ and $\mv{M}^T$ denote the conjugate transpose and transpose of $\mv{M}$, respectively. ${\mathtt{Diag}}(x_1, \cdots, x_K)$ denotes a diagonal matrix with the diagonal elements given by $x_1, \cdots, x_K$. $\mathbb{C}^{x\times y}$ denotes the space of $x\times y$ complex matrices.

\section{System Model and Problem Formulation}\label{sec:system}
We consider a practical clustered CoMP system by focusing on one given cluster, in which $N$ BSs each equipped with $M$ antennas cooperatively send independent messages to $K$ single-antenna MTs. {\color{black}As shown in Fig. \ref{fig:1}, the BSs are assumed to be locally deployed with solar panels and/or wind turbines for energy harvesting from the environment, and also equipped with smart meters to enable their energy cooperation through the aggregator in smart grid}. We consider a narrow-band system with ZF-based precoding, which requires that the number of active MTs in each cluster is no larger than the total number of transmitting antennas at all $N$ BSs, i.e., $K\le MN$; while the results of this paper can be readily generalized to more practical setups with broadband transmission and arbitrary number of MTs by applying time-frequency user scheduling (see e.g. \cite{YooGoldsmith}) and/or other precoding schemes. For convenience, we denote the set of MTs and that of BSs as $\mathcal{K} = \{1,\ldots,K\}$ and $\mathcal{N}=\{1,\ldots,N\}$, respectively, with $k,l$ indicating MT index and $i,j$ for BS index.

We assume that all BSs within each cluster can perfectly share their {\color{black}communication information} (including both the transmit messages and CSI) via high-capacity low-latency backhaul links, similarly as in \cite{Zhang2010,ZhangChen2009,NgHuang2010}, {\color{black}and can also perfectly measure and exchange their energy information (i.e., the energy harvesting rates over time) by using the smart meters. Here, the assumption of perfect information sharing enables us to obtain the performance upper bound and characterize the theoretical limits of practical systems.} It is also assumed that BSs in each cluster can {\color{black}implement energy cooperation to} share energy with each other by locally injecting/drawing power to/from the {\color{black}aggregator}. With both information and energy sharing, the joint communication and energy cooperation among $N$ BSs within each cluster can be coordinated by a central unit, which gathers {\color{black}the communication and energy information} from all BSs, and jointly optimizes their cooperative transmission and energy sharing.  The central unit can be either a separate entity deployed in the network, or one of the $N$ BSs which serves as the cluster head.

Furthermore, we assume quasi-static time-slotted models for both renewable energy and wireless channels, where the energy harvesting rates and the channel coefficients remain constant in each slot and may change from one slot to another. In practice, the duration of a communication block is usually on the order of milliseconds due to the practical wireless channel coherence time, while the energy harvesting process evolves at a much slower speed, e.g., solar and wind power typically remains constant over windows of seconds. Without loss of generality, in this paper we choose one communication block as the reference time slot and use the terms ``energy'' and ``power'' interchangeably by normalizing the slot duration. {\color{black}Under this model, the energy harvesting rates at BSs remain constant over e.g. several thousands of communication slots, and thus we focus our study on the joint communication and energy cooperation in one communication slot with given fixed energy harvesting rates at all BSs.} In the following, we first explain the energy cooperation model at BSs, then present the downlink CoMP transmission model, and finally formulate the joint communication and energy cooperation design problem.

\subsection{Energy Cooperation Model}

We consider the energy management at each BS as depicted in Fig. \ref{fig:2}, where the BS does not purchase any expensive on-grid energy to minimize the cost, but instead only uses its locally harvested energy and the transferred energy from other BSs (if not zero) in the same cluster.  {\color{black}The energy harvesting, consumption, and sharing at each BS is coordinated by a smart meter. For energy harvesting,} let the harvested renewable energy at BS $i\in\mathcal N$ be denoted as $E_i\ge 0$. Regarding energy consumption, we only consider the transmit power consumption at each BS by ignoring its non-transmission power due to air conditioner, data processor, etc. for simplicity,{\footnote{Since the non-transmission power is generally modelled as a constant term, our results are readily extendible to the case with non-transmission power included by simply modifying the harvested energy as that offset by the non-transmission related energy consumption at each BS. In this case, the locally generated energy rate should be no smaller than the constant non-transmission power so as to guarantee the reliable operation of BSs. This can be ensured by carefully planning the solar and wind energy harvesting devices at individual BSs or utilizing other backup energy sources such as fuel cells \cite{huawei}.}} and we denote the transmit power of BS $i\in\mathcal N$ as $P_i\ge 0$.

\begin{figure}
\centering
 \epsfxsize=1\linewidth
    \includegraphics[width=8.5cm]{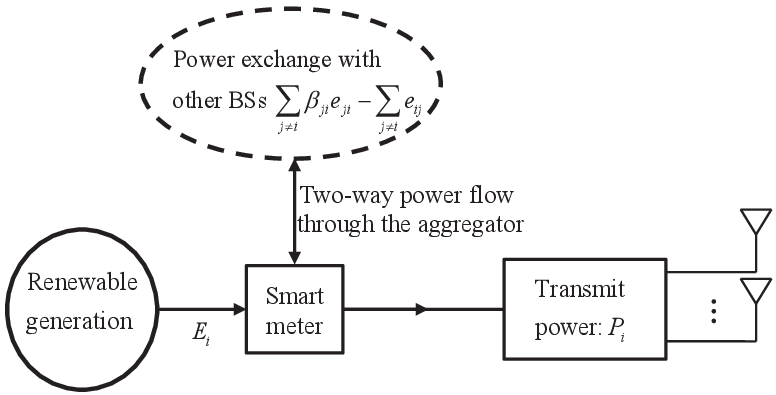}
\caption{Energy management schematics at BS $i$.} \label{fig:2}
\end{figure}

Next, we introduce the energy cooperation among BSs. Let the transferred energy from BS $i$ to BS $j$ be denoted as $e_{ij}\ge 0$, $i,j\in\mathcal{N}, i\neq j$. Practically, this can be implemented by BS $i$ injecting power $e_{ij}$ to the {\color{black}{aggregator}} and at the same time BS $j$ drawing power $e_{ij} - q_{ij}(e_{ij})$ from the {\color{black}{aggregator}}, where $q_{ij}(e_{ij})$ denotes the resulting power loss in the power network with $0<q_{ij}(e_{ij})<e_{ij}$. In practical power systems, the power transfer loss $q_{ij}(e_{ij})$ is normally characterized by a linear model, i.e., $q_{ij}(e_{ij}) = \zeta_{ij} e_{ij}$, where $0 < \zeta_{ij} < 1$ is termed the ``incremental loss''  that represents the incremental power loss in the power network caused by {\color{black}both the power injected by BS $i$ and power drawn by BS $j$}, and $\zeta_{ij}$ is normally calculated via the so-called ``$\mv B$ matrix loss formula'' method (see \cite{WoodWollenberg} for more details). The linear power loss model is practically valid because the variation of power flows in the network due to the injected power $e_{ij}$ is negligibly small as compared to the total power volume in the network. For notational convenience, we define $\beta_{ij} \triangleq 1-\zeta_{ij}$ as the energy transfer efficiency from BS $i$ to BS $j$, where $0 < \beta_{ij} < 1, \forall i,j\in\mathcal{N}, i\neq j$. Thus, when BS $i$ transfers power $e_{ij}$ to BS $j$, the effective energy drawn at BS $j$ needs to be $\beta_{ij}e_{ij}$, in order to have a zero net energy exchanged with the aggregator to maintain its stability. For the ease of analysis, we also consider the special case of $\beta_{ij}=0, \forall i,j\in\mathcal{N}, i\neq j$, which occurs when no energy transfer among BSs is implemented, as well as another special case of $\beta_{ij}=1, \forall i,j\in\mathcal{N}, i\neq j$ for the ideal scenario of no energy transfer loss in the network.

With the aforementioned energy cooperation model, the available transmit power of BS $i$, $P_i$, should satisfy the following constraint:
\begin{align}
0 \le P_i & \le  E_i + \sum_{j\in\mathcal{N},j\neq i}\beta_{ji} e_{ji} - \sum_{j\in\mathcal{N},j\neq i}e_{ij},  i\in\mathcal{N}.\label{eqn:v6:1}
\end{align}

It is worth noting from (\ref{eqn:v6:1}) that to implement the energy exchange with the other $N-1$ BSs, each BS $i\in\mathcal{N}$ only needs to either draw the total amount of energy, $\sum_{j\in\mathcal{N}}\beta_{ji} e_{ji}-\sum_{j\in\mathcal{N}}e_{ij}$, from the {\color{black}{aggregator}} if this term is positive, or otherwise inject the opposite amount (i.e., $-\sum_{j\in\mathcal{N}}\beta_{ji} e_{ji}+\sum_{j\in\mathcal{N}}e_{ij}$) of energy into the {\color{black}{aggregator}} (cf. Fig. \ref{fig:2}). Furthermore, since the total power injected to the grid by all BSs, i.e., $\sum_{i\in\mathcal{N}}\sum_{j\in\mathcal{N},j\neq i}e_{ij}$, is equal to that drawn from the grid by all BSs, i.e., $\sum_{i\in\mathcal{N}}\sum_{j\in\mathcal{N},j\neq i}\beta_{ij}e_{ij}$, plus the total power loss, i.e., $\sum_{i\in\mathcal{N}}\sum_{j\in\mathcal{N},j\neq i}\zeta_{ij}e_{ij}$, the net energy exchanged with the {\color{black}{aggregator}} is zero.

\subsection{Downlink Cooperative ZF Transmission}

We denote the channel vector from BS $i$ to MT $k$ as $\mv{h}_{ik} \in \mathbb{C}^{1\times M}, i\in\mathcal{N},k\in\mathcal{K}$, and the channel vector from all $N$ BSs in one particular cluster to MT $k$ as $\mv{h}_k = [\mv{h}_{1k}~\ldots~\mv{h}_{Nk}]\in \mathbb{C}^{1\times MN}, k\in\mathcal{K}$. It is assumed that the channel vector $\mv{h}_k$'s are drawn from a certain set of independent continuous distributions. Since we consider cooperative downlink transmission by $N$ BSs, the downlink channel in each cluster can be modelled as a $K$-user multiple-input single-output broadcast channel (MISO-BC) with a total number of $MN$ transmitting antennas from all $N$ BSs.

We consider cooperative ZF precoding at BSs \cite{ZhangChen2009,Zhang2010}, with $K\leq MN$,  although the cases of other precoding schemes can also be studied similarly. Let the information signal for MT $k\in\mathcal{K}$ be denoted by $s_k$ and its associated precoding vector across $N$ BSs denoted by $\mv{t}_k \in \mathbb{C}^{MN \times 1}$. Accordingly, the transmitted signal for MT $k$ can be expressed as
$\mv{x}_k = \mv{t}_k s_k.$ Thus, the received signal at MT $k$ is given by
\begin{align}\label{eqn:2}
\mv{y}_k = \mv{h}_k \mv{x}_k + \sum_{l\in\mathcal{K},l\neq k}\mv{h}_k\mv{x}_l + v_k, k\in\mathcal{K},
\end{align}
where $\mv{h}_k \mv{x}_k$ is the desired signal for MT $k$, $\sum_{l\in\mathcal{K},l\neq k}\mv{h}_k\mv{x}_l$ is the inter-user interference within the cluster, and $v_k$ denotes the background additive white Gaussian noise (AWGN) at MT $k$, which is assumed to be of zero mean and variance $\sigma_{k}^2$. Note that in this paper, the background noise $v_k$ at each receiver $k$ may include the downlink interference from other BSs outside the cluster, although this effect can be neglected if proper frequency assignments over different clusters have been designed to minimize any inter-cluster interference. It is also assumed that Gaussian signalling is employed at BSs, i.e., $s_k$'s are circularly symmetric complex Gaussian (CSCG) random variables with zero mean and unit variance. Thus, the covariance matrix of the transmitted signal for MT $k$ can be expressed as $\mv{S}_k = \mathbb{E}(\mv{x}_k\mv{x}_k^H) = \mv{t}_k\mv{t}_k^H \succeq \mv{0}$. Accordingly, the transmit power at BS $i$ can be expressed as \cite{Zhang2010}
\begin{align}\label{eqn:3}
P_i = \sum_{k\in\mathcal{K}}\mathtt{tr}(\mv{B}_i\mv{S}_k),  i\in\mathcal{N},
\end{align}
where \begin{align*}
{\mv{B}}_{i}\triangleq{\mathtt{
Diag}}\bigg(\underbrace{0,\cdots,0}_{(i-1)M},\underbrace{1,\cdots,1}_{M},\underbrace{0,\cdots,0}_{(N-i)M}\bigg).
\end{align*}
By combining (\ref{eqn:3}) and (\ref{eqn:v6:1}), we obtain the transmit power constrains under BSs' energy cooperation, given by
\begin{align}\label{eqn:power_cons}
\sum_{k\in\mathcal{K}}\mathtt{tr}(\mv{B}_i\mv{S}_k) \le  E_i + \sum_{j\in\mathcal{N},j\neq i}\beta_{ji} e_{ji} - \sum_{j\in\mathcal{N},j\neq i}e_{ij},  i\in\mathcal{N}.
\end{align}

Specifically, the cooperative ZF precoding is described as follows. Define $\mv{H}_{-k} = \left[\mv{h}_{1}^{T}, \ldots, \mv{h}_{k-1}^{T}, \mv{h}_{k+1}^{T},\right.$ $\left.\ldots, \mv{h}_{K}^{T}\right]^T$, $k\in\mathcal{K}$, where $\mv{H}_{-k}\in\mathbb{C}^{(K-1)\times MN}$. Let the (reduced) singular value decomposition (SVD) of  $\mv{H}_{-k}$ be denoted as $\mv{H}_{-k} = \mv{U}_{k}\mv{\Sigma}_{k}\mv{V}_{k}^H$, where $ \mv{U}_{k}\in\mathbb{C}^{(K-1)\times (K-1)}$ with $ \mv{U}_{k} \mv{U}_{k}^H =  \mv{U}_{k}^H\mv{U}_{k} = \mv{I}$, $\mv{V}_{k}\in\mathbb{C}^{MN\times (K-1)}$ with $ \mv{V}_{k}^H\mv{V}_{k} = \mv{I}$, and $\mv{\Sigma}_{k}$ is a $(K-1)\times(K-1)$ diagonal matrix. Define the projection matrix $\mv{P}_{k} = \mv{I} - \mv{V}_{k}\mv{V}_{k}^H$. Without loss of generality, we can express $\mv{P}_{k} = \tilde{\mv{V}}_{k}\tilde{\mv{V}}_{k}^{H}$, where $\tilde{\mv{V}}_{k} \in \mathbb{C}^{MN \times (MN-K+1)}$ satisfies $\tilde{\mv{V}}_{k}^{H}\tilde{\mv{V}}_{k}=\mv{I}$ and ${\mv{V}}_{k}^{H}\tilde{\mv{V}}_{k}=\mv{0}$. Note that  $[{\mv{V}}_{k},\tilde{\mv{V}}_{k}]$ forms an $MN\times MN$ unitary matrix.  We then consider the cooperative ZF precoding vector given by
\begin{align}\label{eqn:precoder}
\mv{t}_k = \sqrt{p_{k}} \tilde{\mv{V}}_{k}\frac{(\mv{h}_k\tilde{\mv{V}}_{k})^H}{\|\mv{h}_k\tilde{\mv{V}}_{k}\|},
\end{align}
and accordingly, the transmit covariance matrices can be expressed as
\begin{align}
\mv{S}_{k} = p_{k}\frac{\tilde{\mv{V}}_{k}\tilde{\mv{V}}_{k}^H\mv{h}_k^H\mv{h}_k\tilde{\mv{V}}_{k}\tilde{\mv{V}}_{k}^H}{\|\mv{h}_k\tilde{\mv{V}}_{k}\|^2}, \label{eqn:9}
\end{align}
where $p_{k} \ge 0, \forall k \in \mathcal{K}.$  Note that for simplicity, in this paper we consider separately designed ZF precoding and power allocation as in (\ref{eqn:precoder}) and (\ref{eqn:9}), while our results can be extended to the cases with optimal joint ZF precoding and power allocation (see \cite{Zhang2010}) as well as other precoder designs. Under the above ZF precoding design, the inter-user interference within each cluster can be completely eliminated, i.e., we have $\mv{h}_k\mv{t}_l=0$, or equivalently $\mv{h}_{k}\mv{S}_l\mv{h}_{k}^H = 0, \forall k,l\in\mathcal{K},k\neq l$. As a result, the achievable data rate by the $k$th MT can be expressed as
\begin{align}\label{datarate}
r_k = \log_2\bigg(1+\frac{\mv{h}_{k}\mv{S}_{k}\mv{h}_{k}^{H}}{\sigma^2_k}\bigg) = \log_2(1+a_{k}p_{k}),
\end{align}
where $a_{k} = \frac{\|\mv{h}_{k}\tilde{\mv{V}}_{k}\|^2}{\sigma_{k}^{2}}, \forall k\in\mathcal{K}$. Meanwhile, given $\{\mv{S}_{k}\}$ in (\ref{eqn:9}), the power constraints in (\ref{eqn:power_cons}) can be rewritten as
\begin{align}
\sum_{k\in\mathcal{K}} b_{ik}p_k \le  E_i + \sum_{j\in\mathcal{N},j\neq i}\beta_{ji} e_{ji} - \sum_{j\in\mathcal{N},j\neq i}e_{ij},  i\in\mathcal{N},
\end{align}
where  $b_{ik}=\frac{\mathtt{tr}\left({\mv{B}}_i\tilde{\mv{V}}_{k}\tilde{\mv{V}}_{k}^H\mv{h}_k^H\mv{h}_k\tilde{\mv{V}}_{k}\tilde{\mv{V}}_{k}^H\right)}{\|\mv{h}_k\tilde{\mv{V}}_{k}\|^2}$, $\forall i\in\mathcal{N},k \in \mathcal{K}$.{\footnote{Since it is assumed that the channel vector $\mv{h}_k$'s are independently distributed, without loss of generality we assume $a_k>0,b_{ik}>0,\forall i,k$.}}

\subsection{Problem Formulation}

We aim to jointly optimize the transmit power allocations $\{p_{k}\}$ at all $N$ BSs, as well as their transferred energy $\{e_{ij}\}$, so as to maximize the weighted sum-rate throughput (in bps/Hz) for all $K$ MTs given by $\sum\limits_{k\in\mathcal{K}}  \omega_k r_k$, where $r_k$ denotes the achievable rate by the $k$th MT given in (\ref{datarate}), and $\omega_k > 0$ denotes the given weight for MT $k, k\in\mathcal{K}$. Here, larger weight value of $\omega_k$ indicates higher priority of transmitting information to MT $k$ as compared to other MTs; hence, by properly designing the weight $\omega_k$'s, rate fairness among different MTs can be ensured. Next, we formulate the joint communication and energy cooperation problem as
\begin{align}
\mathrm{(P1)}:\nonumber\\
\mathop{\mathtt{max}}\limits_{\{p_k\},\{e_{ij}\}}&  \sum\limits_{k\in\mathcal{K}} \omega_k\log_2(1+a_{k}p_{k})\label{eqn:p1}\\
\mathtt{s.t.}~~&\sum_{k\in\mathcal{K}} b_{ik}p_k \le  E_i + \sum_{j\in\mathcal{N},j\neq i}\beta_{ji} e_{ji} - \sum_{j\in\mathcal{N},j\neq i}e_{ij},\forall i\in\mathcal{N}\label{eqn:8}\\
&p_{k}\ge 0,\ e_{ij}\ge 0,\ \forall k\in\mathcal{K},~i,j\in\mathcal{N}, i\neq j.\label{eqn:11}
\end{align}

Before we proceed to solving problem (P1), we first consider the following two special cases for (P1) with $\beta_{ij} = 0, \forall i,j\in \mathcal{N}, i\neq j$ and $\beta_{ij}=1, \forall i,j\in\mathcal{N}, i\neq j$, respectively, to draw some insights. The former case of $\beta_{ij} = 0,\forall i,j\in \mathcal{N}, i\neq j$ corresponds to the scenario of no energy transfer between any two BSs in the cluster, for which it is equivalent to setting $e_{ij} = 0, \forall i,j\in\mathcal{N}, i\neq j$. In this case, the power constraints in 
(\ref{eqn:8}) reduce to
\begin{align}
\sum_{k\in\mathcal{K}} b_{ik}p_k& \le  E_i,\forall i\in\mathcal{N}.\label{eqn:ver13:power:perBS2}
\end{align}
Accordingly, (P1) reduces to the conventional MISO-BC weighted sum-rate maximization problem with $N$ per-BS power constraints, given by $E_i, \forall i\in\mathcal{N}$ \cite{Zhang2010}. On the other hand, if $\beta_{ij}= 1,\forall i,j\in\mathcal{N}, i\neq j$, the energy of any two BSs can be shared ideally without any loss. It can then be easily verified that in this case BSs' individual power constraints in 
(\ref{eqn:8}) can be combined into one single sum-power constraint as \begin{align}
\sum_{k\in\mathcal{K}}p_k &\le \sum_{i\in\mathcal{N}}E_i.\label{eqn:ver13:sumpower2}
\end{align}
Thus, (P1) reduces to the conventional MISO-BC weighted sum-rate maximization problem under one single sum-power constraint \cite{Zhang2010}.
\begin{figure}
\centering
 \epsfxsize=1\linewidth
    \includegraphics[width=8.5cm]{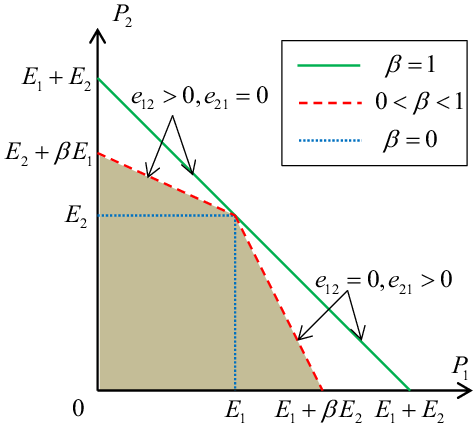}
\caption{Feasible power region of BSs' transmit power under energy cooperation for a two-cell system.} \label{fig:3}
\end{figure}

As an example, we consider a two-cell system with $N=2$, where $\beta_{ij} \triangleq \beta, \forall i,j\in\{1,2\},i\neq j$. We depict in Fig. \ref{fig:3} the feasible power region (shown as  the shaded area) consisting of the all available transmit power pairs $(P_1,P_2)$ at BS 1 and BS 2 under energy cooperation in the case of $0<\beta<1$, as compared to the other two extreme cases of $\beta=0$ and $\beta=1$. It is observed that the boundary of the power region when $\beta>0$ is always attained either by $e_{12}\ge0$, $e_{21}=0$ or  $e_{12}=0$, $e_{21}\ge0$. Similarly, it can be shown that for the general multi-cell case with $N> 2$, the boundary of the $N$-dimension power region should also be attained either by $e_{ij}\ge0$, $e_{ji}=0$ or  $e_{ij}=0$, $e_{ji}\ge0, \forall i,j\in\mathcal{N}, i\neq j$ (for any pair of two BSs $i$ and $j$). This result together with the fact that the weighted sum-rate in (\ref{eqn:p1}) always increases with the transmit power implies that only unidirectional energy transfer between any pair of two BSs is needed to achieve the optimal solution of (P1), as will be more rigorously proved in Section \ref{sec:energyexchange} (see Proposition \ref{proposition:3.3}). It is also observed from Fig. \ref{fig:3} that the power region in the case of $0<\beta<1$ is larger than that without energy sharing, i.e., $\beta=0$, and at the same time smaller than that with ideal energy sharing, i.e., $\beta=1$. Similarly, it is conjectured that for the general multi-cell scenario with $N>2$, the feasible power region in the case of $0<\beta_{ij}<1, \forall i,j\in\mathcal{N},i\neq j$ is also larger than that of $\beta_{ij}=0, \forall i,j\in\mathcal{N},i\neq j$, and at the same time smaller than that of $\beta_{ij}=1, \forall i,j\in\mathcal{N},i\neq j$. As a result, the optimal value of (P1) under practical energy cooperation with $0<\beta_{ij}<1, \forall i,j\in\mathcal{N},i\neq j$ should lie between those of the two extreme cases with the per-BS power constraints ($\beta_{ij}=0, \forall i,j\in\mathcal{N},i\neq j$) and the BSs' sum-power constraint ($\beta_{ij}=1, \forall i,j\in\mathcal{N},i\neq j$), respectively, as will be shown rigorously later in this paper.

\section{Optimal Solution}\label{sec:optimal}

In this section, we solve problem (P1) for the general case of $0\le\beta_{ij}\le1,\forall i,j\in\mathcal{N},i\neq j$ to obtain the optimal joint power allocation and energy transfer solution for BSs' joint communication and energy cooperation.

It can be verified that (P1) is a convex problem, since the objective function is concave and all the constraints are affine. Thus, the Lagrange duality method can be applied to solve this problem optimally \cite{BoydVandenberghe2004}. Let ${\mu}_i \ge 0, i\in\mathcal{N}$, be the dual variable associated with each of the $N$ power constraints of problem (P1) given in (\ref{eqn:8}). Then the partial Lagrangian of problem (P1) can be expressed as
\begin{align}
&\mathcal{L}\left(\{{\mu}_i\},\{{p}_k\},\{e_{ij}\}\right) \nonumber\\=& \sum\limits_{k\in\mathcal{K}}
\omega_k\log_2(1+a_{k}p_{k}) \nonumber\\ &- \sum\limits_{i\in\mathcal{N}}\mu_{i}\left( \sum_{k\in\mathcal{K}} b_{ik} {p}_{k}  -  E_i - \sum_{j\in\mathcal{N},j\neq i}\beta_{ji}e_{ji} + \sum_{j\in\mathcal{N},j\neq i}e_{ij}\right) \nonumber\\
=& \sum\limits_{k\in\mathcal{K}}\left(\omega_k
\log_2(1+a_{k}p_{k}) - \sum\limits_{i\in\mathcal{N}}b_{ik}\mu_{i} {p}_{k} \right)  \nonumber\\ &+\sum\limits_{i,j\in\mathcal{N},i\neq j}(\beta_{ij}\mu_j-\mu_i)e_{ij} + \sum\limits_{i\in\mathcal{N}}\mu_{i}  E_{i}.
\label{eqn:14}
\end{align}
Accordingly, the dual function is given by
\begin{align}
f(\{{\mu}_i\})= \mathop\mathtt{max}_{\{{p}_k\},\{e_{ij}\}} ~& \mathcal{L}\left(\{{\mu}_i\},\{{p}_k\},\{e_{ij}\}\right)\nonumber\\
\mathtt{s.t.}~& {p}_{k}\ge 0, e_{ij}\ge 0,\ \forall k\in\mathcal{K}, i,j\in\mathcal{N}, i\neq j.
 \label{eqn:15}
\end{align}
Thus, the dual problem is defined as
\begin{align}
\mathrm{(P1D)}:\mathop\mathtt{min}_{\{{\mu}_i\ge 0\}}  ~&   f(\{{\mu}_i\}).\label{eqn:Fc:reformulation:dual}
\end{align}
Since (P1) is convex and satisfies the Salter's condition, strong duality holds between (P1) and its dual problem (P1D) \cite{BoydVandenberghe2004}. Thus, we can solve (P1) by solving its dual problem (P1D) equivalently. To solve (P1D), in the following we first solve problem (\ref{eqn:15}) to obtain $f(\{{\mu}_i\})$ with a given set of $\mu_i\geq 0, i\in\mathcal{N}$, and then search over $\{\mu_i\}$ in $\mathbb{R}_N^+$ to minimize $f(\{{\mu}_i\})$ in (\ref{eqn:Fc:reformulation:dual}).

We first give the following lemma.

\begin{lemma}\label{proposition:2}
In order for $f(\{{\mu}_i\})$ to be bounded from above, we have:
\begin{enumerate}
  \item At least one of $\mu_{i}, \forall i\in\mathcal{N}$, must be strictly positive;
  \item $\beta_{ij}\mu_j-\mu_i \le 0, \forall i,j\in\mathcal{N}, i\neq j$, must be true.
\end{enumerate}
\end{lemma}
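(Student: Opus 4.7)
The plan is to prove both claims by contraposition: assuming the stated condition fails, I exhibit a sequence of feasible $(\{p_k\},\{e_{ij}\})$ along which the Lagrangian in \eqref{eqn:14} tends to $+\infty$, which forces $f(\{\mu_i\})=+\infty$ and contradicts boundedness.

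For part 2, the key observation is that the variables $e_{ij}$ enter the Lagrangian only through the linear term $\sum_{i\neq j}(\beta_{ij}\mu_j-\mu_i)e_{ij}$. Suppose there exist $i\neq j$ with $\beta_{ij}\mu_j-\mu_i>0$. Then I fix all other primal variables at arbitrary feasible values (e.g.\ $p_k=0$ for all $k$ and $e_{i'j'}=0$ for all other pairs) and drive $e_{ij}\to+\infty$. Since the coefficient of $e_{ij}$ is strictly positive while all other terms remain finite, $\mathcal{L}\to+\infty$, so $f(\{\mu_i\})$ is unbounded above. This step is essentially immediate once the separability of $\mathcal{L}$ in $\{e_{ij}\}$ is noted.

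For part 1, suppose instead that $\mu_i=0$ for every $i\in\mathcal{N}$. Substituting into \eqref{eqn:14}, every coefficient $\beta_{ij}\mu_j-\mu_i$ and every $b_{ik}\mu_i$ vanishes, so the Lagrangian collapses to
\begin{equation*}
\mathcal{L}\bigl(\{0\},\{p_k\},\{e_{ij}\}\bigr)=\sum_{k\in\mathcal{K}}\omega_k\log_2(1+a_k p_k),
\end{equation*}
which, using $\omega_k>0$ and $a_k>0$ (cf.\ the footnote below \eqref{datarate}), diverges to $+\infty$ as any single $p_k\to+\infty$ while keeping the $e_{ij}$'s feasible (say $e_{ij}=0$). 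Hence $f(\{0,\ldots,0\})=+\infty$, again contradicting boundedness and establishing that at least one $\mu_i$ must be strictly positive.

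I don't anticipate any real technical obstacle here: the lemma is purely a boundedness statement about an affine-in-$\{e_{ij}\}$, concave-in-$\{p_k\}$ function over a conic domain, and the two necessary conditions are precisely the standard ``no positive direction of recession'' requirements. The only care needed is to ensure that when I push one variable to infinity to violate boundedness, the remaining variables can be chosen feasibly (all nonnegative), which is trivially satisfied by setting them to zero.
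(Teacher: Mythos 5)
Your proof is correct and takes essentially the same approach as the paper: both parts argue by contraposition, driving a single $p_k$ (respectively a single $e_{ij}$) to infinity while fixing the other variables at zero to show the dual function is unbounded. The paper's own proof is just a terser version of exactly this argument.
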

\begin{proof}
See Appendix \ref{appendix:for:lemma1}.
\end{proof}

According to Lemma \ref{proposition:2}, we only need to solve problem (\ref{eqn:15}) with given $\{{\mu}_i\}$ satisfying $\mu_i \geq 0, \forall i\in\mathcal{N}$ (but not all equal to zero), and $\beta\mu_j-\mu_i \le 0, \forall i,j\in\mathcal{N}, i\neq j$, since otherwise $f(\{{\mu}_i\})$ will be unbounded from above and thus need not to be considered for the minimization problem in (\ref{eqn:Fc:reformulation:dual}). In this case, problem (\ref{eqn:15}) can be decomposed into the following $K+N^2-N$ number of subproblems (by removing the irrelevant constant term $\sum\limits_{i\in\mathcal{N}}\mu_{i}  E_{i}$ in (\ref{eqn:14})):
\begin{align}
\mathop\mathtt{max}\limits_{p_{k}\ge0} ~&\omega_k\log_2(1+a_{k}p_{k}) - \sum\limits_{i\in\mathcal{N}}b_{ik}\mu_{i} {p}_{k}, ~\forall k\in\mathcal{K},\label{eqn:17}\\
\mathop\mathtt{max}\limits_{e_{ij}\ge 0}~ &(\beta_{ij}\mu_j-\mu_i)e_{ij}, ~\forall i,j\in\mathcal{N}, i\neq j.\label{eqn:18}
\end{align}

For the $K$ subproblems in (\ref{eqn:17}), note that $\sum\limits_{i\in\mathcal{N}}b_{ik}\mu_{i}   > 0$ always holds due to Lemma \ref{proposition:2}. Thus, it can be easily verified that the optimal solutions can be obtained as
\begin{align}
p_k^{(\{\mu_i\})} = \left(\frac{\omega_k}{\ln2\sum\limits_{i\in\mathcal{N}} b_{ik}\mu_{i}} - \frac{1}{a_k}\right)^+,  ~\forall k\in\mathcal{K},\label{eqn:pj}
\end{align}
where $(x)^+ = \mathtt{max}(x,0)$. Next, consider the remaining $N^2-N$ subproblems in (\ref{eqn:18}), for which it can be easily shown that  with $\beta_{ij}\mu_j-\mu_i \le 0, \forall i\neq j$ given in Lemma \ref{proposition:2}, the optimal solutions are
\begin{align}\label{eqn:x_i}
e_{ij}^{(\{\mu_i\})} = 0,  ~\forall i,j\in\mathcal{N},i\neq j.
\end{align}
Notice that if $\beta_{ij}\mu_j-\mu_i = 0$, then the optimal solution of $e_{ij}$ to problem (\ref{eqn:18}) is not unique and can take any non-negative value. In this case, we let $e_{ij}^{(\{\mu_i\})}= 0$ given in (\ref{eqn:x_i}) only for solving the dual problem in (\ref{eqn:15}), which may not be the optimal primary solution of $e_{ij}$ to problem (P1).

Combining the results in (\ref{eqn:pj}) and (\ref{eqn:x_i}), we obtain $f(\{{\mu}_i\})$ with given $\{{\mu}_i\}$ satisfying $\mu_i \geq 0, \forall i\in\mathcal{N}$ (but not all equal to zero), and $\beta_{ij}\mu_j-\mu_i \le 0, \forall i,j\in\mathcal{N}$. Then, we solve problem (P1D) in (\ref{eqn:Fc:reformulation:dual}) by finding the optimal $\{{\mu}_i\}$ to minimize $f(\{{\mu}_i\})$. According to Lemma \ref{proposition:2}, (P1D) can be equivalently re-expressed as
\begin{align}
\mathrm{(P1D)}:\mathop\mathtt{min}_{\{\mu_i\}}  ~&   f(\{{\mu}_i\})\nonumber\\
\mathtt{s.t.}  ~&{\mu}_i\ge 0,  ~\forall i\in\mathcal{N}\nonumber\\
&\beta_{ij}\mu_j-\mu_i \le 0,~~ \forall i,j\in\mathcal{N},i\neq  j.\label{eqn:Fc:reformulation:dual2}
\end{align}

Since (P1D) is convex but not necessarily differentiable, a subgradient based method such as the ellipsoid method \cite{Boyd:ConvexII} can be applied, for which it can be shown that  the subgradients of $f(\{{\mu}_i\})$ for given $\mu_i$ are $E_i - \sum_{k\in\mathcal{K}} b_{ik} {p}_{k}^{(\{\mu_i\})} + \sum_{j\in\mathcal{N},j\neq i}\beta_{ji} e_{ji}^{(\{\mu_i\})} - \sum_{j\in\mathcal{N},j\neq i}e_{ij}^{(\{\mu_i\})} = E_i - \sum_{k\in\mathcal{K}} b_{ik} {p}_{k}^{(\{\mu_i\})}, \forall i\in\mathcal{N}$, where the equality follows from (\ref{eqn:x_i}). Therefore, the optimal solution of (P1D) can be obtained as $\{{\mu}_i^*\}$.

With the optimal dual solution $\{{\mu}_i^*\}$ at hand, the corresponding $\{p_k^{(\{\mu_i^*\})}\}$ in (\ref{eqn:pj}) become the optimal solution for (P1), denoted by $\{p_k^*\}$. Now, it remains to obtain the optimal solution of $\{e_{ij}\}$ for (P1), denoted by $\{e_{ij}^*\}$. In general, $\{e_{ij}^*\}$ cannot be directly obtained from (\ref{eqn:x_i}) with given $\{\mu_i^*\}$, since the solution of (\ref{eqn:x_i}) is not unique if $\beta_{ij} \mu_j^*-\mu_i^* = 0$. Fortunately, it can be shown that given $\{p_k^*\}$, any  $\{e_{ij}\}$ that satisfy the linear constraints in (\ref{eqn:8}) and (\ref{eqn:11}) are the optimal solution to (P1). Thus, we can obtain $\{e_{ij}^*\}$ by solving the following feasibility problem.
\begin{align}
\mathop{\mathtt{find}} ~&  \{e_{ij}\}\nonumber\\
\mathtt{s.t.}~~&\sum_{k\in\mathcal{K}} b_{ik}p_k^* \le  E_i + \sum_{j\in\mathcal{N},j\neq i}\beta_{ji} e_{ji} - \sum_{j\in\mathcal{N},j\neq i}e_{ij},\forall i\in\mathcal{N}\nonumber\\
& e_{ij}\ge 0,\ \forall i,j\in\mathcal{N}, i\neq j.\label{eqn:e_ij}
\end{align}
Since problem (\ref{eqn:e_ij}) is a simple linear program (LP), it can be efficiently solved by existing softwares, e.g., $\mathtt{CVX}$ \cite{cvx}. As a result, we have finally obtained $\{e_{ij}^*\}$ and thus have solved (P1) completely.

In summary, one algorithm to solve (P1) for the general case of $0\le\beta_{ij}\le 1, \forall i,j\in\mathcal{N},i\neq j,$ is given in Table I as Algorithm 1, in which $\mu_i = \mu > 0, \forall i\in\mathcal{N}$, are chosen as the initial point for the ellipsoid method in order to satisfy the constraints in (\ref{eqn:Fc:reformulation:dual2}). Note that Algorithm 1 needs to be implemented at a central unit, which is assumed to have all the CSI and energy information from all BSs in the same cluster.

\begin{table}[t]
\begin{center}
\caption{{\bf Algorithm 1}: Algorithm for Solving Problem (P1)} \vspace{0.01cm}
\hrule \vspace{0.01cm}
\begin{itemize}
\item[a)] Initialize $\mu_i = \mu > 0, \forall i\in\mathcal{N}$.
\item[b)] {\bf Repeat:}
    \begin{itemize}
    \item[1)] Obtain $\{p_k^{(\{\mu_i\})}\}$ using (\ref{eqn:pj}) with given  $\{\mu_i\}$;
    \item[2)] Compute the subgradients of $f(\{{\mu}_i\})$ as $E_i - \sum_{k\in\mathcal{K}} b_{ik} {p}_{k}^{(\{\mu_i\})}, \forall i\in\mathcal{N}$, then update $\{\mu_i\}$ accordingly based on the ellipsoid method \cite{Boyd:ConvexII}, subject to the constraints of ${\mu}_i\ge 0,\forall i\in\mathcal{N}$, and $\beta\mu_j-\mu_i \le 0, \forall i,j\in\mathcal{N},i\neq  j.$
    \end{itemize}
\item[c)] {\bf Until} $\{\mu_i\}$ all converge within a prescribed accuracy.
\item[d)] Set $p_k^* = p_k^{(\{\mu_i\})}, \forall k\in\mathcal{K}$.
\item[e)] Compute $\{e_{ij}^*\}$ by solving the LP in (\ref{eqn:e_ij}).
\end{itemize}
\vspace{0.01cm} \hrule \vspace{0.01cm}\label{algorithm:1}
\end{center}
\vspace{-0cm}
\end{table}

\subsection{Practical Implementation of Energy Cooperation}

So far, we have solved (P1) with any given $0\le \beta_{ij}\le 1,\forall i,j\in\mathcal{N},i\neq j$. However, as we have discussed in Section \ref{sec:system}, in order to practically implement the energy cooperation, it may not be necessary for each BS $i$ to know the exact values of $e_{ij}^*$'s or $e_{ji}^*$'s, $\forall j\neq i$, while it suffices to know the total power that should be drawn or injected from/to the {\color{black}aggregator} at BS $i$, i.e., the value of $\sum_{j\in\mathcal{N},j\neq i}\beta_{ji} e_{ji}^* - \sum_{j\in\mathcal{N},j\neq i}e_{ij}^*$. In the following, we focus our study on the practical case when the energy cooperation is feasible within this cluster, i.e., $0<\beta_{ij}< 1,\forall i,j\in\mathcal{N},i\neq j$, under which we derive $\sum_{j\in\mathcal{N},j\neq i}\beta_{ji} e_{ji}^* - \sum_{j\in\mathcal{N},j\neq i}e_{ij}^*, \forall i\in\mathcal{N}$ and accordingly show how to implement the energy cooperation with lower complexity without the need of solving the LP in (\ref{eqn:e_ij}) to obtain $e_{ij}^*$'s.

\begin{proposition}\label{proposition:3.2}
If energy cooperation is feasible within each cluster, i.e., $0<\beta_{ij}< 1,\forall i,j\in\mathcal{N},i\neq j$, then it follows that $\mu_i^* > 0,\forall i\in\mathcal{N}$.
\end{proposition}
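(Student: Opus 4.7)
The plan is to derive the claim directly from Lemma \ref{proposition:2}, exploiting the strict positivity of the transfer efficiencies $\beta_{ij}$ to propagate positivity of the dual variables from one index to all others.

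First I would invoke Lemma \ref{proposition:2}(1), which guarantees that the optimal dual solution $\{\mu_i^*\}$ is not identically zero: there exists some index $i_0 \in \mathcal{N}$ with $\mu_{i_0}^* > 0$. The core of the argument is then to promote this single positivity to positivity of every $\mu_j^*$, $j \neq i_0$.

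Next I would apply Lemma \ref{proposition:2}(2), which states that $\beta_{ij}\mu_j^* - \mu_i^* \leq 0$ for every ordered pair $i \neq j$. Specialising this inequality to the pair where the ``$i$'' slot is an arbitrary $j \neq i_0$ and the ``$j$'' slot is $i_0$, I obtain
\begin{equation*}
\mu_j^* \ \geq\ \beta_{j i_0}\, \mu_{i_0}^*.
\end{equation*}
Under the standing hypothesis $0 < \beta_{ij} < 1$ for all $i \neq j$, the right-hand side is strictly positive, so $\mu_j^* > 0$. Since $j$ was arbitrary, combined with $\mu_{i_0}^* > 0$ this shows $\mu_i^* > 0$ for every $i \in \mathcal{N}$, which is the claim.

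There is essentially no obstacle here: the entire proposition is a one-line propagation argument once the two parts of Lemma \ref{proposition:2} are in hand. The only minor bookkeeping is to verify the correct orientation of the no-arbitrage constraint $\beta_{ij}\mu_j^* \leq \mu_i^*$ (i.e., the dual variable at the receiving BS, scaled by the efficiency, upper-bounds the dual variable at the sending BS), so that positivity flows from $i_0$ to every other index rather than the reverse. This matches the economic intuition that if one BS has positive marginal value for power, then every other BS must also, because otherwise a lossy transfer into the valuable node would create unbounded Lagrangian gain, contradicting dual feasibility.
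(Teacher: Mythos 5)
Your argument is correct and is essentially the paper's proof run in the contrapositive direction: the paper assumes some $\mu_{\bar i}^*=0$ and uses $\beta_{\bar i j}\mu_j^*\le\mu_{\bar i}^*$ to force all dual variables to zero, contradicting Lemma \ref{proposition:2}(1), whereas you start from the $i_0$ with $\mu_{i_0}^*>0$ and use $\beta_{j i_0}\mu_{i_0}^*\le\mu_j^*$ to propagate positivity directly --- same lemma, same inequality, same one-step propagation. (One cosmetic slip: your parenthetical gloss states the orientation backwards --- in $\beta_{ij}\mu_j^*\le\mu_i^*$ it is the \emph{sending} BS's dual that upper-bounds the scaled dual of the receiving BS --- but the displayed inequality you actually use is correct.)
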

\begin{proof}
See Appendix \ref{appendix:1}.
\end{proof}

From Proposition \ref{proposition:3.2} and the following complementary slackness conditions \cite{BoydVandenberghe2004} satisfied by the optimal solution of (P1):
\begin{align}
\mu_{i}^*\left( \sum_{k\in\mathcal{K}} b_{ik} {p}_{k}^*  -  E_i - \sum_{j\in\mathcal{N},j\neq i}\beta_{ji} e_{ji}^* + \sum_{j\in\mathcal{N},j\neq i}e_{ij}^*\right) = 0, \forall i\in\mathcal{N},
\label{eqn:v4:13}\end{align}
it follows that in the case of $0<\beta_{ij}< 1,\forall i,j\in\mathcal{N},i\neq j$, we have
\begin{align}
\sum_{k\in\mathcal{K}} b_{ik} {p}_{k}^*  -  E_i - \sum_{j\in\mathcal{N},j\neq i}\beta_{ji} e_{ji}^* + \sum_{j\in\mathcal{N},j\neq i}e_{ij}^* = 0, \forall i\in\mathcal{N}.
\label{eqn:v4:14}\end{align}
In other words, the optimal solutions of (P1) are always attained with all the power constraints in (\ref{eqn:8}) being met with equality, i.e., the total energy available at all BSs is efficiently utilized to maximize the weighted sum-rate.

From (\ref{eqn:v4:14}), it is evident that once the optimal transmit power allocation $p_k^*$'s are determined, the total amount of power that should be drawn or injected from/to the grid at each BS $i$ can be easily obtained as $\sum_{j\in\mathcal{N},j\neq i}\beta_{ji} e_{ji}^* - \sum_{j\in\mathcal{N},j\neq i}e_{ij}^* = \sum_{k\in\mathcal{K}} b_{ik} {p}_{k}^*  -  E_i, \forall i\in\mathcal{N}$. That is, if the required power for transmission is larger than the
available energy at BS $i$, i.e., $\sum_{k\in\mathcal{K}} b_{ik} {p}_{k}^*  >  E_i$, then an amount of energy, $\sum_{k\in\mathcal{K}} b_{ik} {p}_{k}^*  - E_i > 0$, will be drawn from the {\color{black}aggregator}; otherwise, the extra amount of energy, $E_i - \sum_{k\in\mathcal{K}} b_{ik} {p}_{k}^* > 0$, will be injected to the {\color{black}aggregator}. As a result, it suffices for each BS to know the transmit power allocation $p_k^*$'s to implement the proposed energy cooperation, and thus it is not necessary for the central unit to first solve the LP in (\ref{eqn:e_ij}) and then inform the BSs of the values of $e_{ij}^*$'s. Therefore, the computation complexity at the central unit as well as the signalling overhead in the backhaul can be both reduced.

\subsection{Energy Exchange Behavior of BSs}\label{sec:energyexchange}

Although the exact values of $e_{ij}^*$'s are not required for implementing the energy cooperation, we further discuss them under the practical case of $0<\beta_{ij}< 1, \forall i,j\in\mathcal{N}, i\neq j$ to give more insight on the optimal energy exchange among BSs.

\begin{proposition}\label{proposition:3.3}
If $0<\beta_{ij}< 1, \forall i,j\in\mathcal{N}, i\neq j$, then it must hold that at least one of $\sum_{j\in\mathcal{N},j\neq i}\beta_{ji} e_{ji}^*$ and $\sum_{j\in\mathcal{N},j\neq i}e_{ij}^*$ should be zero, $\forall i\in\mathcal{N}$.
\end{proposition}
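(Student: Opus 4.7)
The plan is a direct proof by contradiction that exploits the dual analysis already developed above. Suppose that for some $i\in\mathcal{N}$ both sums $\sum_{j\neq i}\beta_{ji}e_{ji}^*$ and $\sum_{j\neq i}e_{ij}^*$ are strictly positive; then I can pick indices $j_1,j_2\in\mathcal{N}\setminus\{i\}$ (a priori possibly equal) with $e_{j_1 i}^*>0$ and $e_{ij_2}^*>0$. The goal is to show that such a configuration is incompatible with the optimality of $\{e_{ij}^*\}$ and the physical properties of the efficiencies $\beta_{ij}$.

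Next I will use the structure of the Lagrangian subproblem (\ref{eqn:18}) at the optimal dual $\{\mu_i^*\}$. The coefficient of $e_{ij}$ in that subproblem is $\beta_{ij}\mu_j^*-\mu_i^*$, which is non-positive by Lemma \ref{proposition:2}(2). Since $\{e_{ij}^*\}$ must attain the maximum of the Lagrangian at $\{\mu_i^*\}$ (by strong duality and complementary slackness), a strictly positive value $e_{j_1 i}^*>0$ forces the coefficient to vanish, i.e.\ $\beta_{j_1 i}\mu_i^*=\mu_{j_1}^*$; likewise $e_{ij_2}^*>0$ forces $\beta_{ij_2}\mu_{j_2}^*=\mu_i^*$. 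By Proposition \ref{proposition:3.2} every $\mu_j^*$ is strictly positive, so the two equalities can be combined cleanly into the single identity $\beta_{j_1 i}\beta_{ij_2}\mu_{j_2}^*=\mu_{j_1}^*$.

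I then close the argument by splitting on whether $j_1=j_2$. In the degenerate case $j_1=j_2=j$ the identity reduces to $\beta_{ji}\beta_{ij}=1$, which is impossible since both factors lie strictly in $(0,1)$. In the generic case $j_1\neq j_2$, the indices $j_1,i,j_2$ are pairwise distinct, so dual feasibility from Lemma \ref{proposition:2}(2) applied to the pair $(j_1,j_2)$ gives $\beta_{j_1 j_2}\mu_{j_2}^*\le\mu_{j_1}^*$; substituting the identity above yields $\beta_{j_1 j_2}\le\beta_{j_1 i}\beta_{ij_2}$. This directly contradicts the strict triangle-type inequality $\beta_{j_1 j_2}>\beta_{j_1 i}\beta_{ij_2}$ stated in Section \ref{sec:system}, which is the physical statement that relaying through an intermediate BS incurs extra loss. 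Either branch of the dichotomy delivers the contradiction, completing the proof.

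The step I expect to require the most care is the very first one, namely justifying ``$e_{j_1 i}^*>0\Rightarrow \beta_{j_1 i}\mu_i^*=\mu_{j_1}^*$''. The footnote following (\ref{eqn:x_i}) warns that the optimizers of (\ref{eqn:18}) are non-unique exactly when the coefficient is zero, so I have to be explicit that I am appealing to complementary slackness for the primal-dual pair of (P1), not merely to the particular selection $e_{ij}^{(\{\mu_i\})}=0$ used in solving (\ref{eqn:18}). Once this step is secured, the remainder is short algebra and the strict loss inequality on the $\beta_{ij}$'s carries the rest of the argument.
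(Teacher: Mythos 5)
Your proof is correct, but it follows a genuinely different route from the paper's. The paper argues purely in the primal domain: assuming BS $\bar i$ both receives ($e_{\bar j\bar i}^*>0$) and sends ($e_{\bar i\tilde j}^*>0$), it reroutes the relayed energy directly from $\bar j$ to $\tilde j$, and because $\beta_{\bar j\bar i}\beta_{\bar i\tilde j}<\beta_{\bar j\tilde j}$ this frees a strictly positive amount $A$ of energy at every BS, which is then used to raise every $p_k^*$ by a small $\delta>0$ and strictly increase the objective --- contradicting optimality. You instead work in the dual domain: strong duality forces $(\{p_k^*\},\{e_{ij}^*\})$ to maximize the Lagrangian at $\{\mu_i^*\}$, so any strictly positive $e_{j_1 i}^*$ or $e_{ij_2}^*$ pins the corresponding linear coefficient in (\ref{eqn:18}) to zero, and chaining the resulting identities $\beta_{j_1 i}\mu_i^*=\mu_{j_1}^*$ and $\beta_{ij_2}\mu_{j_2}^*=\mu_i^*$ against dual feasibility violates either $\beta_{ij}\beta_{ji}<1$ (when $j_1=j_2$) or the strict relaying inequality $\beta_{j_1 j_2}>\beta_{j_1 i}\beta_{ij_2}$ (when $j_1\neq j_2$). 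Both arguments ultimately rest on the same physical fact that relaying through an intermediate BS is strictly lossy. Your version is algebraically shorter, reuses Lemma \ref{proposition:2} and Proposition \ref{proposition:3.2} rather than building a fresh perturbation, and explicitly covers the degenerate bidirectional case $j_1=j_2$, which the paper's construction tacitly assumes away by using $\beta_{\bar j\tilde j}$ (implicitly requiring $\bar j\neq\tilde j$); the paper's version, in exchange, is more constructive and exhibits exactly where the wasted energy goes. Your flagged concern about the non-uniqueness footnote is handled correctly: the right justification is indeed that the primal optimum must attain the maximum of the Lagrangian at the optimal dual point, not the particular zero selection made in (\ref{eqn:x_i}).
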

\begin{proof}
See Appendix \ref{appendix:2}.
\end{proof}

From Proposition \ref{proposition:3.3} together with the fact that $e_{ij}^* \ge 0,\forall i,j\in\mathcal{N},i\neq j$, it is inferred that in the case of $0<\beta_{ij}< 1, \forall i,j\in\mathcal{N}, i\neq j$, each BS should either transfer power to the other  $N-1$ BSs in the same cluster or receive power from them, but not both at the same time, which affirms our conjecture in Section \ref{sec:system} (cf. discussion for Fig. \ref{fig:3}). This result is quite intuitive, since receiving and transferring energy simultaneously at one BS will inevitably induce unnecessary energy loss.

\subsection{Comparison to No Energy Cooperation}\label{sec:comp}

Finally, we provide a comparison between the case with energy cooperation with $0<\beta_{ij}< 1,\forall i,j\in\mathcal{N},i\neq j,$ versus that without energy cooperation with $\beta_{ij}=0,\forall i,j\in\mathcal{N},i\neq j$, to show the benefit of energy cooperation in renewable powered CoMP systems.

Consider the case of $\beta_{ij}=0,\forall i,j\in\mathcal{N},i\neq j$, i.e., without energy cooperation. In this case, since it is known that $e_{ij} = 0,\forall i,j\in\mathcal{N},i\neq j$ is optimal for (P1), (P1) is simplified as
\begin{align}
\mathrm{(P1-NoEC)}:\mathop{\mathtt{max}}\limits_{\{p_k\}}~&  \sum\limits_{k\in\mathcal{K}}
\log_2(1+a_{k}p_{k})\nonumber\\
\mathtt{s.t.}~
&\sum_{k\in\mathcal{K}}b_{ik}p_{k}\le  E_i,\forall i\in\mathcal{N}\label{eqn:no:energy:coop2}\\
&{p}_{k}\ge 0,\ \forall k\in\mathcal{K}.\nonumber
\end{align}
In problem (P1-NoEC), it can be shown that the transmit power constraints at $N$ BSs in (\ref{eqn:no:energy:coop2}) may not be all tight in the optimal solution, especially when the available energy amounts at different BSs are substantially different. {\color{black}{As an extreme case, consider the setup with $N=2$ single-antenna ($M=1$) BSs cooperatively transmitting to $K=2$ MTs,}} in which the energy harvesting rate at BS 1 (e.g., with wind turbines) is much larger than that at BS 2 (e.g., with solar panels at night), i.e., $E_1 \gg E_2$. In this case, it is likely that for any feasible power allocation $\{p_{k}\}$ satisfying the power constraint of BS 2, i.e., $\sum_{k\in\mathcal{K}}b_{2k}p_{k}\le  E_2$, it holds that $\sum_{k\in\mathcal{K}}b_{1k}p_{k} <  E_1$ due to $E_1 \gg E_2$. Accordingly, the power constraint at BS 1 cannot be tight at the optimal solution, {\color{black}{as will be shown later in the simulation results (cf. Fig. \ref{fig:4})}}. From this example, it can be inferred that with only communication cooperation among BSs but without energy cooperation, the harvested energy at some BSs (those with large harvested energy amount) may not be totally utilized and thus wasted\footnote{Another possible solution to reduce energy waste at each BS is to deploy energy storage device to store unused energy for future use, which however increases the operation cost of the cellular systems.}; as a result, the achievable weighted sum-rate of MTs will be greatly limited by the BSs with less available energy. In contrast, if there are both communication and energy cooperation in the case of   $0<\beta_{ij}< 1,\forall i,j\in\mathcal{N},i\neq j$, it then follows from (\ref{eqn:v4:14}) that  the total energy at all BSs will be used to maximize the weighted sum-rate (albeit that some energy is lost in the energy transfer), which thus results in a new {\it energy cooperation gain} over the conventional CoMP system, as will also be shown by our numerical results in Section \ref{sec:numerical}.

\section{Suboptimal Solutions}\label{sec:suboptimal}

In this section, we consider three suboptimal solutions which apply only energy or communication cooperation or neither of them in a single-cluster CoMP system  with renewable power supplies as assumed in our system model in Section \ref{sec:system}. These solutions will provide performance benchmarks for our proposed optimal solution based on joint communication and energy cooperation.
\subsection{No Energy Cooperation, Communication Cooperation Only}

This scheme corresponds to setting $\beta_{ij}=0,\forall i,j\in\mathcal{N},i\neq j,$ in (P1), where no energy transfer among BSs is feasible. Algorithm 1 can be directly applied to solve the weighted sum-rate maximization problem in this case.

\subsection{No Communication Cooperation, Energy Cooperation Only}\label{subsec:noCom}
For the case with energy cooperation only, we consider the practical case of $0<\beta_{ij}\leq 1,\forall i,j\in\mathcal{N},i\neq j$. For this case, we assume that each BS $i$ serves $K_i$ associated MTs,{\footnote{The MT-BS association is assumed to be given here, which can be obtained by e.g. finding the BS that has the strongest channel to each MT.}} where $\sum_{i\in\mathcal{N}} K_i =K$ and $K_i \le M, \forall i\in\mathcal{N}$. Let $\mathcal{K}_i = \{\sum_{j=1}^{i-1}K_j+1, \ldots, \sum_{j=1}^{i}K_j\}$ denote the set of MTs assigned to BS $i$, where $|\mathcal{K}_i| = K_i$ and $\bigcup_{i\in\mathcal{N}}\mathcal{K}_i=\mathcal{K}$. Each BS then transmits to its associated $K_i$ MTs independently with ZF precoding over orthogonal bands each with ${1}/{N}$ portion of the total bandwidth (to avoid any ICI within each cluster). Although no communication cooperation is applied, BSs can still implement energy cooperation as described in Section \ref{sec:system}.

Let the transmit covariance from BS $i$ to its associated MT $k$ be denoted by $\bar{\mv{S}}_{k} \in \mathbb{C}^{M\times M}, \forall k\in\mathcal{K}_i,i\in\mathcal{N}$. We then design the ZF precoding for each BS $i$ as follows. Define $\bar{\mv{H}}_{-k} = \left[\mv{h}_{i,\sum_{j=1}^{i-1}K_j+1}^{T}, \ldots, \mv{h}_{i, k-1}^{T}, \mv{h}_{i,k+1}^{T},\ldots, \mv{h}_{i,\sum_{j=1}^{i}K_j}^{T}\right]^T$, where $\bar{\mv{H}}_{-k}\in\mathbb{C}^{(K_i-1)\times M}, \forall k\in\mathcal{K}_i, i\in\mathcal{N}$. Let the (reduced) SVD of  $\bar{\mv{H}}_{-k}$ be denoted as $\bar{\mv{H}}_{-k} = \bar{{\mv{U}}}_{k}\bar{\mv{\Sigma}}_{k}\bar{\mv{V}}_{k}^H$, where $\bar{\mv{U}}_{k}\in\mathbb{C}^{(K_i-1)\times (K_i-1)}$ with $\bar{\mv{U}}_{k} \bar{\mv{U}}_{k}^H =  \bar{\mv{U}}_{k}^H\bar{\mv{U}}_{k} = \mv{I}$, $\bar{\mv{V}}_{k}\in\mathbb{C}^{M\times (K_i-1)}$ with $\bar{\mv{V}}_{k}^H\bar{\mv{V}}_{k} = \mv{I}$, and $\bar{\mv{\Sigma}}_{k}$ is a $(K_i-1)\times(K_i-1)$ diagonal matrix. Define the projection matrix $\bar{\mv{P}}_{k} = \mv{I} - \bar{\mv{V}}_{k}\bar{\mv{V}}_{k}^H$. Without loss of generality, we can express $\bar{\mv{P}}_{k} = \hat{\mv{V}}_{k}\hat{\mv{V}}_{k}^{H}$, where $\hat{\mv{V}}_{k} \in \mathbb{C}^{M \times (M-K_i+1)}$ satisfies $\hat{\mv{V}}_{k}^{H}\hat{\mv{V}}_{k}=\mv{I}$ and $\bar{{\mv{V}}}_{k}^{H}\hat{\mv{V}}_{k}=\mv{0}$. Note that  $[\bar{\mv{V}}_{k},\hat{\mv{V}}_{k}]$ forms an $M\times M$ unitary matrix.  As a result, the ZF transmit covariance matrices at BS $i$ can be expressed as
\begin{align}
\bar{\mv{S}}_{k} =  \bar p_{k}\frac{\hat{\mv{V}}_{k}\hat{\mv{V}}_{k}^H\mv{h}_{ik}^H\mv{h}_{ik}\hat{\mv{V}}_{k}\hat{\mv{V}}_{k}^H}{\|\mv{h}_{ik}\hat{\mv{V}}_{k}\|^2}, \label{eqn:svd:p3}
\end{align}
where $\bar p_{k} \ge 0, \forall k \in \mathcal{K}_i, i\in\mathcal{N}.$ Using (\ref{eqn:svd:p3}) and by defining $\bar a_{k} = \frac{\|\mv{h}_{ik}\hat{\mv{V}}_{k}\|^2}{\sigma_{k}^{2}} > 0$, $\forall k\in\mathcal{K}_i, i\in\mathcal{N}$, the weighted sum-rate maximization problem with energy cooperation only is formulated as the following joint power allocation and energy transfer problem.
\begin{align*}
\mathrm{(P2)}:\nonumber\\
\mathop{\mathtt{max}}\limits_{\{\bar p_k\},\{e_{ij}\}}&  \frac{1}{N}\sum\limits_{i\in\mathcal{N}}\sum\limits_{k\in\mathcal{K}_i}
\omega_k\log_2(1+\bar a_{k}\bar p_{k})\\
\mathtt{s.t.}~~
&\sum_{k\in\mathcal{K}_i}\bar p_{k}\le  E_i + \sum_{j\in\mathcal{N},j\neq i}\beta_{ji} e_{ji} - \sum_{j\in\mathcal{N},j\neq i}e_{ij},\forall i\in\mathcal{N}\\
&\bar{p}_{k}\ge 0, e_{ij}\ge 0,\ \forall k\in\mathcal{K}_i, ~i,j\in\mathcal{N}, i\neq j
\end{align*}

It is observed that (P2) is a special case of (P1), where $a_k$ and $p_k$ in (P1) is replaced as $\bar a_k$ and $\bar p_k$ in (P2), $\forall k\in\mathcal{K}$, respectively, and $\{b_{ik}\}$ in (P1) is set as $b_{ik}= 1,\forall k\in\mathcal{K}_i, i\in\mathcal{N}$, $b_{ik} = 0, \forall k \in \mathcal{K}\setminus\mathcal{K}_i, i\in\mathcal{N}$ in (P2). As a result, (P2) can be solved by Algorithm 1 by a change of variables/parameters as specified above. Note that although Algorithm 1 for solving (P2) may also need to be implemented at a central unit, it only requires the knowledge of $\bar a_{k}$'s, $\forall k\in\mathcal{K}_i$, which can be locally computed at each BS $i$. Thus, the complexity/overhead of solving (P2) for energy cooperation only is considerably lower than that of solving (P1) for joint communication and energy cooperation, which requires the additional transmit messages and CSI information from all BSs.

\subsection{No Cooperation}
When both communication and energy cooperation are not available, the scheme corresponds to problem (P2) by setting $\beta_{ij}=0,\forall i,j\in\mathcal{N},i\neq j$. In this case, we have $e_{ij}=0,\forall i,j\in\mathcal{N},i\neq j$, and Algorithm 1 can also be used to solve the weighted sum-rate maximization problem for this special case.

\section{Simulation Results}\label{sec:numerical}

In this section, we provide simulation results to evaluate the performance of our proposed joint communication and energy cooperation algorithm. We set $\omega_k = 1, \forall k\in\mathcal{K}$ and thus consider the sum-rate throughput of all MTs as the performance metric. We also set $\beta_{ij} \triangleq \beta, \forall i,j\in\mathcal{N}, i\neq j$.

First, we consider a simple two-cell network with single-antenna BSs to show the throughput gain of joint communication and  energy cooperation. We set $M=1, N=2, K=2$, and $\sigma_k^2 = 1, k=1,2$. It is assumed that the channels are independent and identically distributed (i.i.d.) Rayleigh fading, i.e., $h_{ik}$ is a CSCG random variable with zero mean and variance $\kappa_{ik}^2$, $i,k=1,2$. We further assume $\kappa_{11}^2=\kappa_{22}^2=1$ for the direct BS-MT channels, and $\kappa_{12}^2\le 1, \kappa_{21}^2\le1$ for the cross BS-MT channels. We average the sum-rates over 1000 random channel realizations.

\begin{figure}
\centering
 \epsfxsize=1\linewidth
    \includegraphics[width=8.5cm]{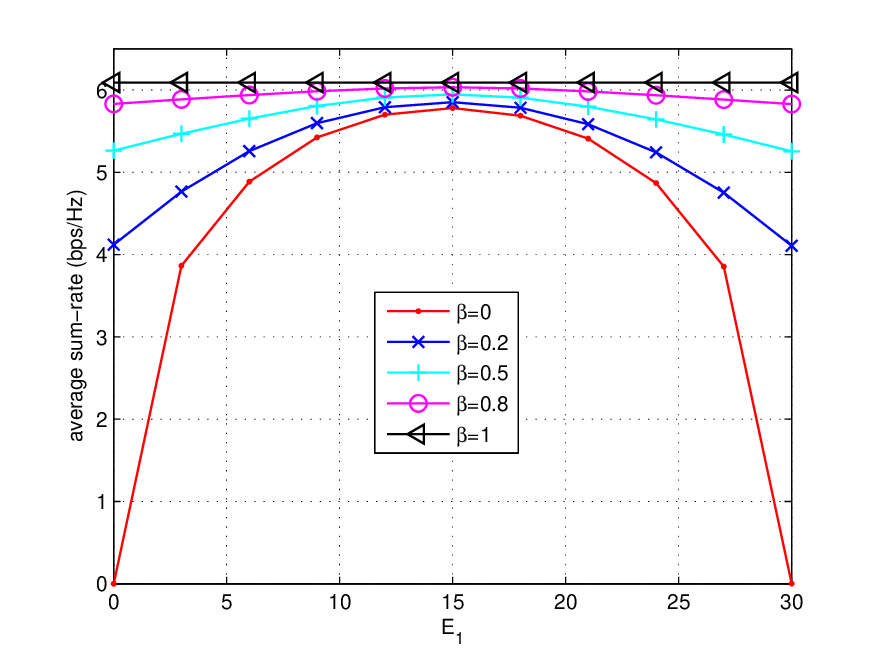}
\caption{Sum-rate performance with $E_1+E_2=30$ in a two-cell network.} \label{fig:4}
\end{figure}

In Fig. \ref{fig:4}, we consider the case when the energy arrival rates  $E_1$ and $E_2$ at the two BSs are constant over time subject to a given sum: $E_1 + E_2 = 30$.\footnote{The energy unit is normalized here such that one unit energy is equivalent to the transmit power required to have an average signal-to-noise ratio (SNR) at each MT equal to one or $0$ dB.} We set $\kappa_{12}^2 = \kappa_{21}^2=0.5$, and plot the average achievable sum-rate with different values of $\beta$ and $E_1$. It is observed that as $\beta$ increases, the sum-rate increases for any given $0\leq E_1\leq 30$, which is due to the fact that larger value of $\beta$ corresponds to smaller energy transfer loss. It is also observed that the sum-rate is zero when $\beta=0$ and $E_1 = 0$ or $30$ (accordingly, $E_2 = 30$ or $0$). This is because that in this case, there is one BS with zero transmit power, and thus without energy sharing between the two BSs  their cooperative ZF precoding is not feasible as discussed in Section \ref{sec:comp}. Furthermore, with any given $0 \le \beta<1$, it is observed that the maximum sum-rate is always achieved when the energy arrival rates at the two BSs are equal, i.e., $E_1 = E_2=15$. The reason is given as follows. Under the given symmetric channel setup, with equal $E_1$ and $E_2$, the amount of transferred energy  between the two BSs is minimized, and so is the energy loss in energy sharing; as a result, the total available energy for cooperative transmission is maximized, which thus yields the maximum sum-rate.

\begin{figure}
\centering
 \epsfxsize=1\linewidth
    \includegraphics[width=8.5cm]{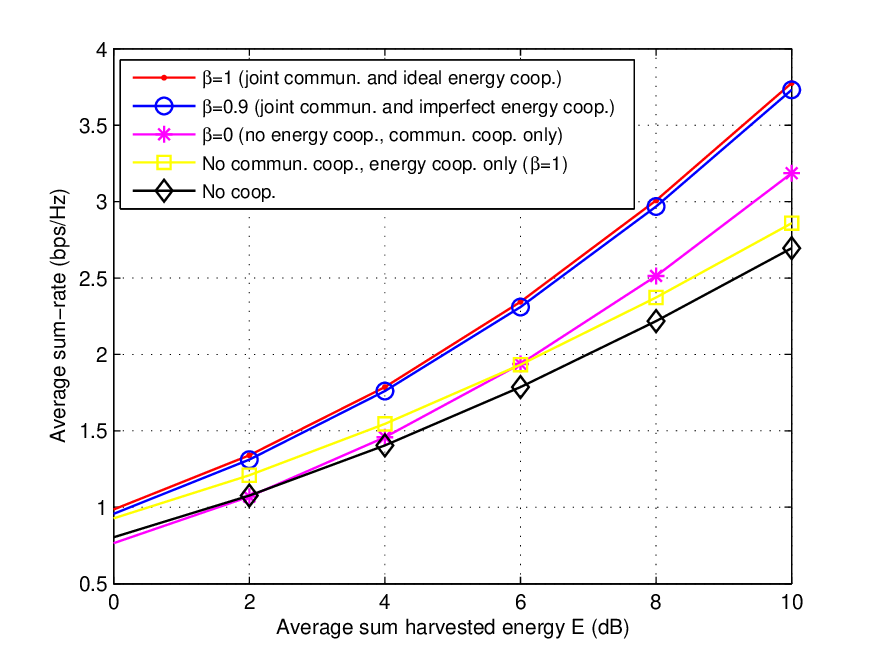}
\caption{Sum-rate comparison with versus without communication and/or energy cooperation in the low-SNR regime.} \label{fig:5}
\epsfxsize=1\linewidth
    \includegraphics[width=8.5cm]{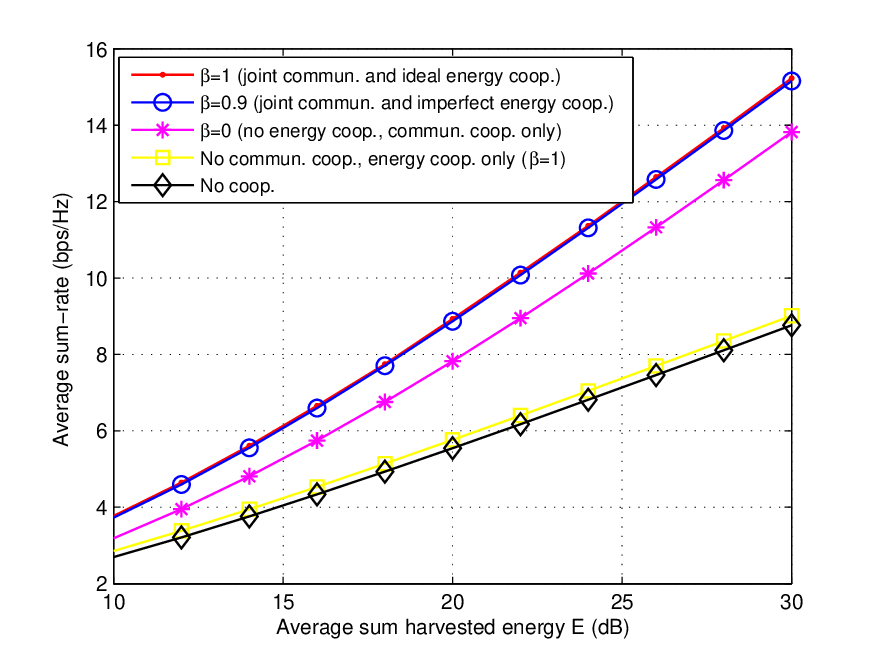}
\caption{Sum-rate comparison with versus without communication and/or energy cooperation in the high-SNR regime.} \label{fig:6}
\vspace{-0cm}
\end{figure}

In Figs. \ref{fig:5} and \ref{fig:6}, we show the sum-rate performance of the proposed optimal scheme with joint communication and energy cooperation with the practical energy transfer efficiency $\beta > 0$ as compared to three suboptimal schemes without communication  and/or energy cooperation given in Section \ref{sec:suboptimal}.{\footnote{
We set $\beta = 0.9$ as a practical energy transfer efficiency for the case with imperfect energy cooperation, since in practice, about 7-12\% of the electricity produced at the generation site is lost during the electricity transmission from the generation facilities to the end users (see \url{http://www.sunshineusainc.com/smartgrid.html}).}} We assume that $\kappa_{12}^2$ and $\kappa_{21}^2$ are independent and uniformly distributed in $[0,1]$, and the energy arrival rates $E_1$ and $E_2$ are independent and uniformly distributed in $[0,E]$ each with an equal mean of $\frac{E}{2}$, where $E$ denotes the average sum-energy harvested by both BSs. Note that the independent energy distribution may correspond to the case where the two BSs are powered by different renewable energy sources, e.g., one by solar energy and the other by wind energy. From Figs. \ref{fig:5} and \ref{fig:6}, it is observed that the joint communication and (ideal) energy cooperation with $\beta = 1$ always achieves the highest sum-rate, while the joint communication and (imperfect) energy cooperation with $\beta = 0.9$ achieves the sum-rate very close to that of $\beta = 1$, and also outperforms the other three suboptimal schemes without communication and/or energy cooperation. This shows the throughput  gain  of joint communication and energy cooperation, even with a non-negligible energy transfer loss. It is also observed that with $E<6$ dB, the scheme of ``energy cooperation only'' outperforms ``communication cooperation only''; however, the reverse is true when $E>6$ dB. This shows that the gain of energy cooperation is more dominant over that of communication cooperation in the low-SNR regime, but vice versa in the high-SNR regime.

\begin{figure}
\centering
 \epsfxsize=1\linewidth
    \includegraphics[width=8.5cm]{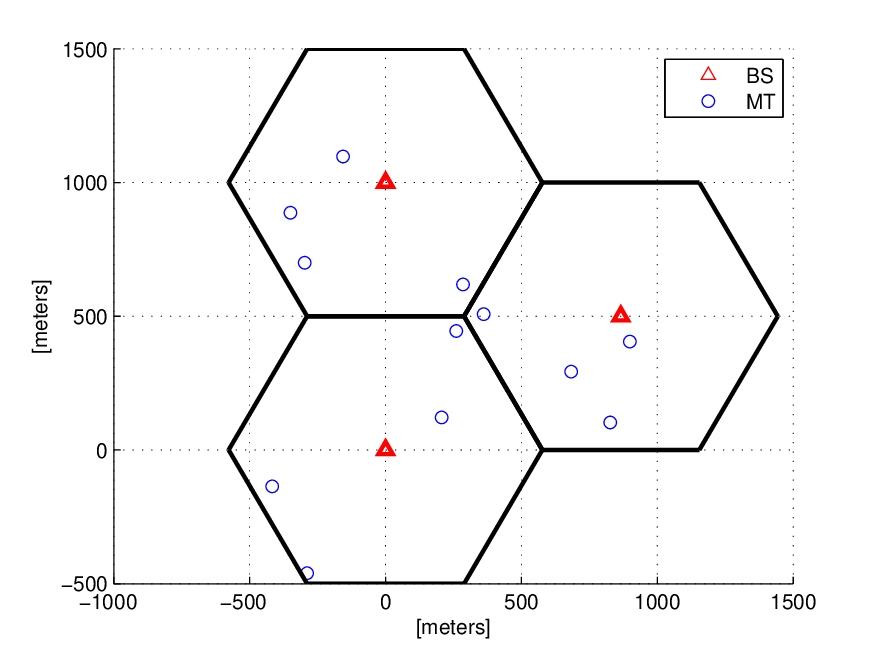}
\caption{Cell configuration for simulation.} \label{fig:7}
\end{figure}
\begin{figure}
\centering
 \epsfxsize=1\linewidth
    \includegraphics[width=8.5cm]{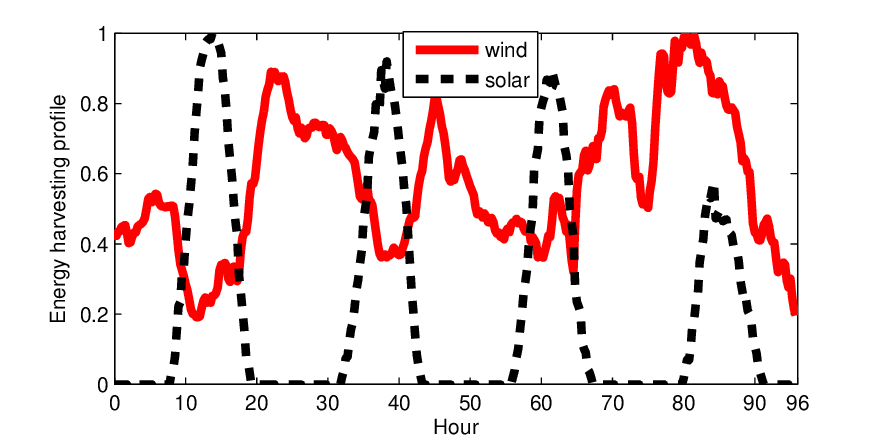}
\caption{Normalized solar and wind energy harvesting profiles.} \label{fig:8}
\end{figure}
Next, we evaluate the performances of the proposed joint communication and energy cooperation scheme as well as the other three suboptimal schemes in a practical three-cell cluster (with $N=3$) as shown in Fig. \ref{fig:7}, where the cells are hexagonal with the inter-BS distance of one kilometer, and we set $M=4$ and $K=12$. We further consider practical wind and solar energy profiles by using the real aggregated solar and wind generations from  0:00, 01 October 2013 to 0:00, 05 October 2013 in Belgium.{\footnote{See \url{http://www.elia.be/en/grid-data/power-generation/} for more details.}} Based on this real data, we can obtain the normalized wind and solar energy harvesting profiles over time as shown in Fig. \ref{fig:8}, where the energy harvesting rates are sampled (averaged) every 15 minutes, and thus the four days' renewable energy data corresponds to 384 points. Let the normalized wind and solar energy harvesting profiles in Fig. \ref{fig:8} be denoted as $\mv{\tau}_{\rm w} = [{\tau}_{{\rm w},1},\ldots,{\tau}_{{\rm w},384}]$ and $\mv{\tau}_{\rm s} = [{\tau}_{{\rm s},1},\ldots,{\tau}_{{\rm s},384}]$, respectively. In our simulations, we assume that all three BSs are deployed with both solar panels and wind turbines of different generation capacities, and set the energy harvesting rates over time at three BSs as $\mv{\tau}_1 = \bar E\cdot(0.5\mv{\tau}_{\rm w} + 0.5\mv{\tau}_{\rm s})$, $\mv{\tau}_2 = \bar E\cdot(0.1\mv{\tau}_{\rm w} + 0.9\mv{\tau}_{\rm s})$, and $\mv{\tau}_3 = \bar E\cdot(0.9\mv{\tau}_{\rm w} + 0.1\mv{\tau}_{\rm s})$, respectively, as shown in Fig. \ref{fig:9}, where $\bar E$ is a given constant. Moreover, since the channel coherence time is much shorter than the energy sample time of 15 minutes, for each sample of energy harvesting profiles we randomly generate $4$ MTs in the area covered by each hexagonal cell and run 100 independent channel realizations.

\begin{figure}
\centering
 \epsfxsize=1\linewidth
    \includegraphics[width=8.5cm]{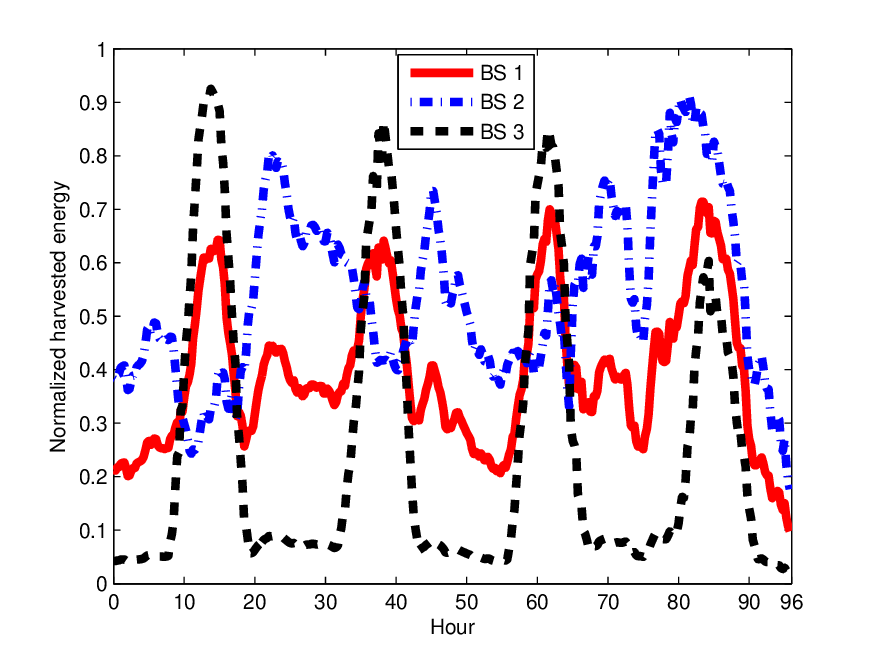}
\caption{Normalized energy harvesting profiles (by $\bar E$) at three BSs for simulation.} \label{fig:9}
\end{figure}

We assume that the channel is modelled by combining pathloss and short-term fading. For each channel realization, it is assumed that the distance-dependent channel attenuation from BS $i$ to MT $k$ is fixed, which is determined by the pathloss model of $\tilde{h}_{ik} = c_0\left(\frac{d_{ik}}{d_0}\right)^{-\varsigma}$, where $c_0 = -60$ dB is a constant equal to the pathloss at a reference distance $d_0 = 10$ meters, $\varsigma = 3.7$ is the pathloss exponent, and $d_{ik}$ denotes the distance from BS $i$ to MT $k$. We also consider i.i.d. Rayleigh distributed short-term fading, i.e., the channels for each realization, denoted by $\mv{h}_{ik}$'s, are zero mean CSCG random vectors with covariance matrix specified by the corresponding pathloss, i.e., $\tilde{h}_{ik}\mv{I}$. In addition, we assume that the background noise at each MT receiver  is $-85$ dBm. As a result, supposing the transmit power at each BS is $10$ dBW, the average SNR for each BS to a MT at each vertex of its covered hexagon is thus 0 dB.

\begin{figure}
\centering
 \epsfxsize=1\linewidth
    \includegraphics[width=8.5cm]{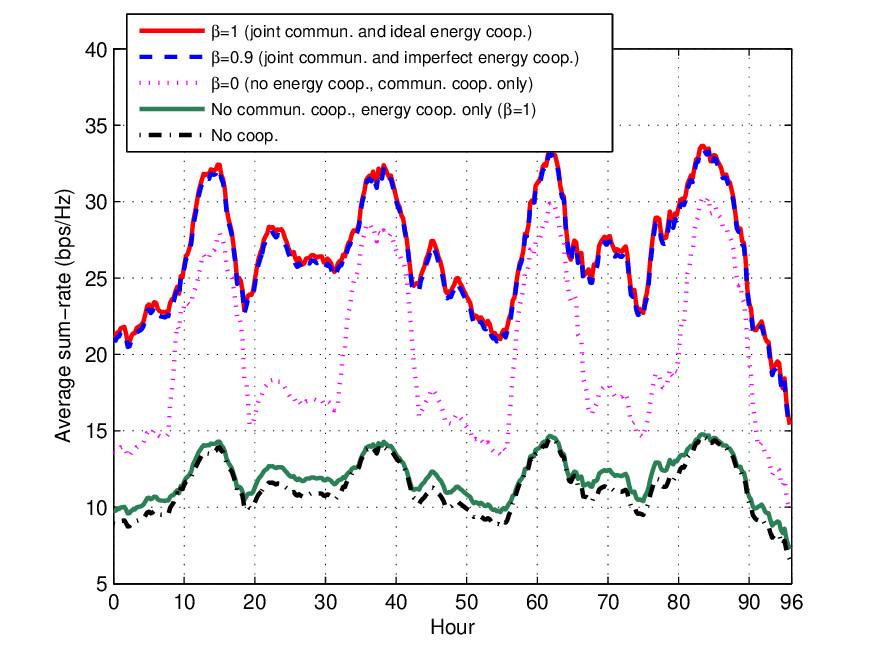}
\caption{Sum-rate performance over time, where $\bar E=10$ dBW.} \label{fig:10}
\end{figure}

In Fig. \ref{fig:10},  we show the average sum-rate performance over time, where we set $\bar E=10$ dBW. It is observed that the performance gain of the ``joint communication and energy cooperation'' scheme against the ``communication cooperation only'' scheme is very significant in the night time (e.g., hours 0-7, 20-30), while this gain becomes much smaller during the day time (e.g., hours 10-18). The reason is given as follows. In the night time, the sum-rate performance of the ``communication cooperation only'' scheme is limited by BS 3 that mainly relies on the solar generation and thus has very low energy harvesting rates; hence, energy cooperation is most beneficial during the night period. In contrast, during the day time when the energy harvesting rates at three BSs are more evenly distributed, the energy cooperation gain becomes less notable. Furthermore, the performance gain of the ``energy cooperation only'' scheme over the ``no cooperation'' scheme is observed to behave similarly. The result shows the benefit of energy cooperation again when the energy harvesting rates are unevenly distributed, even in the case without communication cooperation.

\begin{figure}
\centering
 \epsfxsize=1\linewidth
    \includegraphics[width=8.5cm]{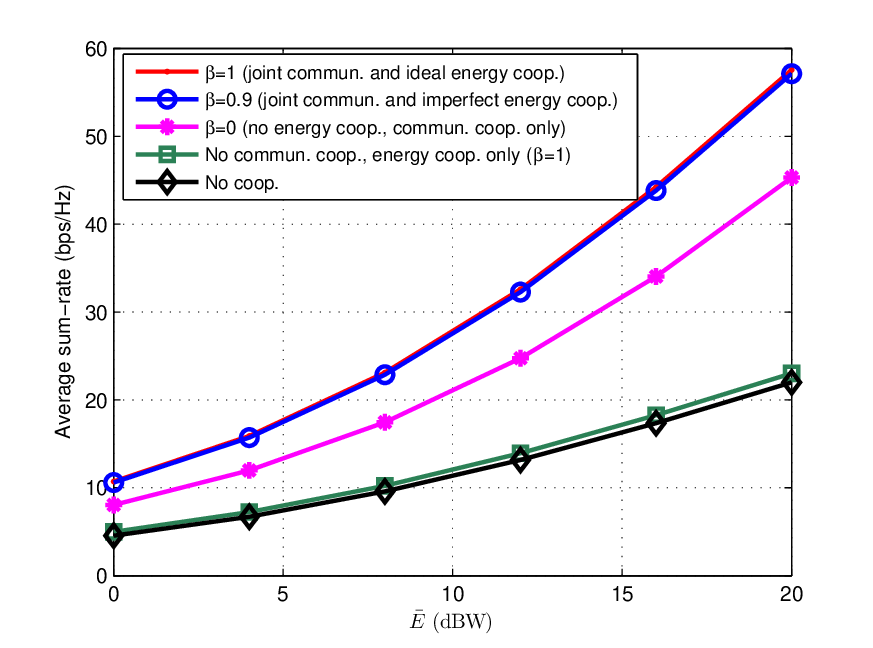}
\caption{Sum-rate comparison with versus without communication and/or energy cooperation.} \label{fig:11}
\end{figure}

In Fig. \ref{fig:11}, we show the MTs' average sum-rate performance over four days versus $\bar E$. Similar to the two-cell case in Figs. \ref{fig:5} and \ref{fig:6}, it is observed that the joint communication and energy cooperation with $\beta=1$ or $\beta=0.9$ considerably outperforms the other three suboptimal schemes at all values of $\bar E$, which shows the significant gains of joint communication and energy cooperation under this practical setup.

\section{Concluding Remarks}\label{sec:conclusion}

In this paper, we have proposed a new joint communication and energy cooperation approach for designing cellular networks with renewable power supplies. With the newly proposed {\color{black}aggregator-assisted} energy cooperation among cooperatively transmitting BSs, we formulate the weighted sum-rate maximization problem for a downlink CoMP system with ZF precoding. By applying convex optimization techniques, we develop an efficient solution to the optimal transmit power allocation and energy transfer at cooperative BSs. Furthermore, we show by simulations under practical setups the potential sum-rate gains by jointly exploiting communication and energy cooperation in cellular networks. It is revealed that under a practical energy transfer efficiency value, energy cooperation is most beneficial when the harvested energy among BSs is unevenly distributed. Due to the space limitation, there are several important issues unaddressed yet in this paper, which are briefly discussed in the following to motivate future work:
\begin{itemize}
 {\color{black} \item So far, we have assumed that the communication and energy information are perfectly obtained and shared within the cluster of BSs to characterize the theoretical limits of joint communication and energy cooperation. In practice, due to channel estimation error,  energy measurement error, and finite capacity and delay of backhaul links for information exchange etc., both communication and energy information at BSs can be imperfect. It is thus important to investigate their effect on the joint cooperation performance and design robust schemes for them.}
  \item We have considered one single cluster of cells with given users' association to BSs to simplify our study. However, in practice, cellular networks need to support time-varying traffic over different cells or clusters of cells. Therefore, how to optimally design communication and energy cooperation jointly with user association and/or cell clustering according to both renewable energy and traffic distributions in the network is also an interesting problem to investigate for future work.

 \item For the purpose of exposition, in this paper we have considered the energy cooperation in a single BS cluster operating in an energy self-sustainable manner (without purchasing any energy from the grid operator). In order to achieve more reliable quality of service (QoS) at MTs, it is beneficial to further allow the BS cluster to trade energy with the grid operator and/or other BS clusters. Given the different energy buying and selling prices provided by the grid operator, it is an interesting problem to jointly design the CoMP communication cooperation, the internal energy cooperation (within each BS cluster), and the external energy trading (with the grid operator and/or among BS clusters) for minimizing the total energy cost subject to the QoS requirements at MTs.
\item We have focused on the joint communication and energy cooperation over independent slots (in the time scale of communication scheduling, say, several milliseconds) to exploit the {\it spatial} renewable energy distributions. Alternatively, energy storage devices can be employed at BSs to charge/discharge energy over {\it time} to smooth out the random renewable energy to match the demand. Since the two techniques achieve the same goal of combating the random renewable energy over different dimensions (space versus time), they can be good complementarities and it is interesting to jointly design the storage management over time with the spatial communication and energy cooperation. However, since energy storage management is normally implemented in the same time scale of the energy harvesting process (say, several minutes), which is much larger than that of the joint communication and energy cooperation considered in this paper, the key challenge is how to efficiently implement the joint space and time cooperation by taking into account the time scale differences.
\end{itemize}

\appendices

\section{Proof of Lemma \ref{proposition:2}}\label{appendix:for:lemma1}
First, suppose that $\mu_{i}=0, \forall i\in\mathcal{N}$. In this case, it is easy to verify that the objective value of (\ref{eqn:15}) goes to infinity as $p_k \to\infty$ for any $k\in\mathcal{K}$, i.e., $f(\{{\mu}_i\})$ is unbounded from above. Therefore, $\mu_{i}, i\in\mathcal{N}$, cannot be all zero at the same time for $f(\{{\mu}_i\})$ to be bounded from above. The first part of this lemma is thus proved.

Next, suppose that there exists a pair of $i$ and $j$ satisfying $\beta_{ij}\mu_j-\mu_i > 0, i,j\in\mathcal{N},i\neq j$. In this case, it is easy to verify that the objective value of (\ref{eqn:15}) goes to infinity with $e_{ij} \to\infty$, i.e., $f(\{{\mu}_i\})$ is unbounded from above. Therefore, $\beta_{ij}\mu_j-\mu_i \le 0, \forall i,j\in\mathcal{N},i\neq j$, must be true for $f(\{{\mu}_i\})$ to be bounded from above. The second part of this lemma is thus proved. As a result, Lemma \ref{proposition:2} is proved.

\section{Proof of Proposition \ref{proposition:3.2}}\label{appendix:1}
Suppose that there exists at least an $\bar{i} \in \mathcal{N}$ such that $\mu_{\bar i}^* = 0$. Then according to $\beta_{ij}\mu_j^*-\mu_i^* \le 0, \forall i,j\in\mathcal{N}, i\neq j$ from Lemma \ref{proposition:2}, it follows that
\begin{align*}
\beta_{ij}\mu_j^*-\mu_{\bar i}^* = \beta\mu_j^* \le 0, \forall j\in\mathcal{N}, j\neq \bar{i}.
\end{align*}
Using the facts that $\beta_{ij}>0$ and $\mu_j^* \ge 0, \forall j\in\mathcal{N}$, it holds that $\mu_j^* = 0, \forall j\in\mathcal{N}, j\neq \bar{i}$. By combining this argument with the presumption that $\mu_{\bar i}^* = 0$, we have $\mu_i^* = 0, \forall i\in\mathcal{N}$, which contradicts Lemma \ref{proposition:2} that at least one of $\mu_{i}^*, i\in\mathcal{N}$, must be strictly positive. Therefore, $\mu_i^* > 0,\forall i\in\mathcal{N}$ must be true, which completes the proof of Proposition \ref{proposition:3.2}.

\section{Proof of Proposition \ref{proposition:3.3}}\label{appendix:2}
Suppose that there exists an optimal solution for (P1), denoted by $\{p_k^*,e_{ij}^*\}$ such that for one given BS $\bar i$ it holds that $\sum_{j\in\mathcal{N},j\neq \bar i}\beta_{j \bar i} e_{j \bar i}^*>0$ and $\sum_{j\in\mathcal{N},j\neq \bar i}e_{\bar ij}^*>0$, at the same time. Without loss of generality we can assume that there exist two BSs $\bar j$ and $\tilde j$ satisfying $e_{\bar j\bar i}^*>0$ and $e_{\bar i\tilde j}^*>0$ with $\bar j\neq \bar i$ and $\tilde j\neq \bar i$. In this case, by letting $A=\left(1- \frac{\beta_{\bar j\bar i}\beta_{\bar i\tilde j}}{\beta_{{\bar j\tilde j}}}\right)\min(e_{\bar j\bar i}^*,e_{\bar i\tilde j}^*)$ with $A > 0$ due to ${\beta_{\bar j\bar i}\beta_{\bar i\tilde j}} < {\beta_{{\bar j\tilde j}}}$, we can construct a new solution for (P2) as $\{\bar p_k,\bar e_{ij}\}$, where $\{\bar e_{ij}\}$ is given by
\begin{align}
\bar e_{\bar i\tilde j} = & e_{\bar i\tilde j}^*- \min(e_{\bar j\bar i}^*,e_{\bar i\tilde j}^*) + \frac{1}{N}A, \label{eqn:proof:2} \\
\bar e_{\bar j\tilde j} = & e_{\bar j\tilde j}^*+\frac{\beta_{\bar i\tilde j}}{\beta_{{\bar j\tilde j}}}\min(e_{\bar j\bar i}^*,e_{\bar i\tilde j}^*),\label{eqn:proof:3}\\
\bar e_{\bar j\bar i} =& e_{\bar j\bar i}^* - \frac{\beta_{\bar i\tilde j}}{\beta_{{\bar j\tilde j}}}\min(e_{\bar j\bar i}^*,e_{\bar i\tilde j}^*), \label{eqn:proof:1}\\
\bar e_{\bar i j} = & e_{\bar i j}^* + \frac{1}{N}A, \forall j\in\mathcal{N},j\neq \bar i, j\neq \tilde j,\label{eqn:proof:4}
\end{align}
and for other $\bar e_{ij}$'s not defined above, we have $\bar e_{ i j} = e_{ i j}^*$. Substituting $\{\bar e_{ij}\}$ defined in (\ref{eqn:proof:2})-(\ref{eqn:proof:4}) into (\ref{eqn:v4:14}), it follows after some simple manipulations that
\begin{align}
\sum_{k\in\mathcal{K}} b_{ik} {p}_{k}^*  =&  E_i - \sum_{j\in\mathcal{N},j\neq i}\beta \bar e_{ji} + \sum_{j\in\mathcal{N},j\neq i} \bar e_{ij} - \frac{1}{N}A\nonumber\\
 <&E_i - \sum_{j\in\mathcal{N},j\neq i}\beta \bar e_{ji} + \sum_{j\in\mathcal{N},j\neq i} \bar e_{ij},~~~~~~ \forall i\in\mathcal{N}.
\end{align}
As a result, we can find a set of $\bar p_k$'s with $\bar p_k = p_k^* + \delta> p_k^*\ge 0, \forall k\in\mathcal{K},$ where $\delta>0$ is chosen as sufficiently small such that
\begin{align}
 \sum_{k\in\mathcal{K}} b_{ik} \bar{p}_{k}  \le E_i - \sum_{j\in\mathcal{N},j\neq i}\beta \bar e_{j i} + \sum_{j\in\mathcal{N},j\neq i} \bar e_{ij}, \forall i\in\mathcal{N}.
\end{align}
Therefore, the newly constructed $\{\bar p_k,\bar e_{ij}\}$ is a feasible solution of (P1) and achieves a larger objective value than that by $\{p_k^*,e_{ij}^*\}$, which contradicts the presumption that $\{p_k^*,e_{ij}^*\}$ is optimal. Thus, it holds that at least one of $\sum_{j\in\mathcal{N},j\neq i}\beta e_{ji}^*$ and $\sum_{j\in\mathcal{N},j\neq i}e_{ij}^*$ should be zero, $\forall i\in\mathcal{N}$. Proposition \ref{proposition:3.3} is thus proved.

\end{document}